\newcommand{\dV}{d\mathrm{Vol}}
\newcommand{\RN}[1]{%
	\textup{\uppercase\expandafter{\romannumeral#1}}%
}
\newcommand{\be}{\begin{equation}}
\newcommand{\ee}{\end{equation}}
\newcommand{\bea}{\begin{eqnarray}}
\newcommand{\eea}{\end{eqnarray}}
\def\bml{\begin{subequations}}
\def\blea{\bml\begin{eqnarray}}
\def\eml{\end{subequations}}
\def\elea{\end{eqnarray}\eml}
\DeclareMathOperator{\WF}{WF}
\newcommand{\coin}[1]{\left[\!\!\left[#1\right]\!\!\right]}
\renewcommand{\sec}{Sec$.$~}
\newtheorem{theorem}{Theorem}
\newtheorem{lemma}[theorem]{Lemma}
\newtheorem{definition}[theorem]{Definition}
\newcommand{\nord}[1]{{:}#1{:}}
\begin{document}
\title{Quantum strong energy inequalities}
 
\author{Christopher J. Fewster}\email{chris.fewster@york.ac.uk}
\author{Eleni-Alexandra Kontou}\email{eleni.kontou@york.ac.uk}
\affiliation{Department of Mathematics, University of York, Heslington, York YO10 5DD, United Kingdom}
\date{\today}

\begin{abstract}
Quantum energy inequalities (QEIs) express restrictions on the extent to which weighted averages of the renormalized energy density can take negative expectation values within a quantum field theory. Here we derive, for the first time, QEIs for the effective energy density (EED) for the quantized non-minimally coupled massive scalar field. The EED is the quantity required to be non-negative in the strong energy condition (SEC), which is used as a hypothesis of the Hawking singularity theorem. Thus  establishing such quantum strong energy inequalities (QSEIs) is a first step towards a singularity theorem for matter described by quantum field theory. 

More specifically, we derive difference QSEIs, in which the local average of the EED is normal-ordered relative to a reference state, and averaging occurs over both timelike geodesics and spacetime volumes. The resulting QSEIs turn out to depend on the state of interest. We analyse the state-dependence of these bounds in Minkowski spacetime for thermal (KMS) states, and show that the lower bounds grow more slowly in magnitude  than the EED itself as the temperature increases. The lower bounds are therefore of lower energetic order than the EED, and qualify as nontrivial state-dependent QEIs.
\end{abstract}

\maketitle

\section{Introduction}
\label{sec:introduction}
 
Quantum energy inequalities (QEIs) were introduced by Ford \cite{Ford:1978qya} 40 years ago as an explanation of why macroscopic violations of the second law of thermodynamics do not occur in quantum field theories. They provide restrictions on the possible magnitude and duration of any negative energy densities or fluxes within a quantum field theory. For a recent introduction to QEIs and summary of known results see Ref.~\cite{Fewster:2012yh} and \cite{Fewster2017QEIs}.

QEIs have been used extensively to constrain exotic spacetimes such as ones allowing superluminal travel, traversable wormholes and closed timelike curves \cite{Ford:1995wg, Pfenning:1997wh, Fewster:2005gp}. In this paper, our main interest will be in whether QEI restrictions are sufficient to prove singularity theorems for matter sources described by quantum fields.
The classical singularity theorems of Hawking and Penrose \cite{Penrose:1964wq, Hawking:1966sx} use pointwise energy conditions, which are easily violated by quantum fields. In particular, Hawking's theorem uses the strong energy condition (SEC), which requires that the effective energy density (EED)~\cite{Brown:2018hym} (cf.,~also \cite{Pirani:2009})
\be\label{eq:EED}
\rho_U := T_{\mu \nu} U^\mu U^\nu -\frac{T}{n-2}  
\ee
is non-negative. Here, $U^\mu$ is a timelike vector representing the observer's velocity, $T$ is the trace of the stress-energy tensor, and $n$ the number of spacetime dimensions.

Violations of the SEC do not necessarily mean that the conclusions of the singularity theorems no longer hold and there has been some progress in proving suitably adapted singularity results under weaker hypotheses on the EED or energy density~\cite{Tipler:1978zz,chicone1980line,Galloway:1981, Borde:1987qr, Roman:1988vv,Wald:1991xn,Borde:1994ai}. In Ref.~\cite{Fewster:2010gm} it was shown that lower bounds on local weighted averages of the EED modelled on QEIs are sufficient to derive singularity theorems of Hawking type (that is, establishing timelike geodesic incompleteness) even if the EED is not everywhere positive or has a negative long-term average. In our recent work \cite{Brown:2018hym}, we established bounds on the EED of the classical nonminimally coupled scalar field and deduced, by the methods of~\cite{Fewster:2010gm}, a Hawking-type singularity theorem for the Einstein--Klein--Gordon theory. Similar methods have been applied to prove an area theorem under weakened hypotheses~\cite{Lesourd2018}.  

Despite the progress made in proving singularity theorems with weakened energy conditions, there has not been yet a singularity theorem for matter described by a quantum field theory (QFT) based on QEIs. In the case of Hawking-type results, the first necessary step is to establish a \emph{quantum strong energy inequality} (QSEI) that provides bounds on the renormalized EED. (We have chosen the name to be reminiscent of the SEC.) In this work we establish, for the first time, various QSEIs for nonminimally coupled scalar fields, by analogy with the analysis of Ref.~\cite{Fewster:2007ec} of (quantum) energy inequalities on the energy density of the nonminimally coupled scalar field. 

In particular, we derive \textit{difference} QEIs, namely, lower bounds on the expectation value of the locally averaged quantized energy density (or similar quantities) in Hadamard state $\omega$, normal ordered relative to a reference Hadamard state $\omega_0$. Recall that Hadamard states are those whose two-point functions have a specific singularity structure, which will be described later. Schematically, difference QEIs take the form
 \begin{equation}
	\label{eqn:gendiff}
	\langle\nord{\rho}_{\omega_0}(f)\rangle_{\omega} = \langle  \rho (f)\rangle_{\omega} - \langle  \rho (f)\rangle_{\omega_0}\geq -\langle \mathfrak{Q}_{\omega_0}(f)\rangle_\omega \,,
\end{equation}
where $\langle  \rho (f)\rangle_{\omega}$ is the Hadamard-renormalized energy density (or similar) in state $\omega$, averaged against $f$, which is a non-negative test function on spacetime, or singularly supported along a timelike curve, and we speak of worldvolume or worldline averages accordingly.
Here, $\mathfrak{Q}_{\omega_0}(f)$ is allowed to be an unbounded operator. In contrast, \textit{absolute} QEI's are lower bounds on $\langle  \rho (f)\rangle_{\omega}$ that do not require a reference state.  

If $\mathfrak{Q}_{\omega_0}(f)$ is a multiple of the unit operator then the right-hand side of Eq.~\eqref{eqn:gendiff} does not depend on $\omega$ (though it will generally depend on the reference state $\omega_0$) and the QEI is called \emph{state-independent}. More generally, if the right-hand side depends non-trivially on $\omega$, inequality~\eqref{eqn:gendiff} is described as \emph{state-dependent QEI}. State-independent difference inequalities for the usual energy density (known as quantum weak energy inequalities (QWEIs)) have been proved in various situations: for example, they were proved for the minimally coupled scalar field in two and four dimensions for Minkowski spacetime \cite{Ford:1994bj}, in static spacetimes \cite{Pfenning:1997rg} and, for all Hadamard states in spacetimes with general curvature in Ref.~\cite{Fewster:1999gj}. Meanwhile, state-independent absolute bounds have been established, again for minimal coupling, in two-dimensions for flat \cite{Flanagan:1997gn} and curved spacetimes \cite{Flanagan:2002bd} and subsequently for four-dimensional curved spacetimes \cite{Fewster:2007rh, Kontou:2014tha}. We refer the reader to~\cite{Fewster2017QEIs} for more references, including results on Dirac, Maxwell and Proca fields and also some results on interacting QFTs.

On the other hand, it is known that the nonminimally coupled scalar field cannot obey a state independent QWEI, as can be seen by explicit examples \cite{Fewster:2007ec}; the same argument also applies to QSEIs. However, as shown in Ref.~\cite{Fewster:2006iy}, nonminimally coupled fields obey state dependent QWEIs of both absolute and difference types. Here, we show that they also obey state dependent difference QSEIs. While it is in principle possible to establish an absolute QSEI, this is the objective of a future work.
 
This paper is organized as follows. In Sec.~\ref{sec:quant} we discuss the quantization procedure we use, giving careful attention to the relation between quantizations based on equivalent classical expressions. In Sec.~\ref{sec:SQEI} we show that the EED obeys a QEI of the form of \eqref{eqn:gendiff} and explicitly derive bounds for worldline and worldvolume averages. In \sec\ref{sec:mink-space} we study the simplified form taken by the QSEI bounds in flat spacetimes. In \sec\ref{sec:KMS}, through explicit calculations for the family of KMS states, we are able to show that, despite being state-dependent, our QSEI bounds are non-trivial in the sense that the lower bound
is of lower energetic order than the energy density itself. Finally, we conclude in \sec\ref{sec:conclusions} with a discussion on how our results inform the Hawking singularity theorem.

We employ $[-,-,-]$ conventions in the Misner, Thorne and Wheeler classification \cite{MTW}. That is, the metric signature is $(+,-,-, \dots)$, the Riemann tensor is defined as $R^{\phantom{\lambda \eta\nu}\mu}_{\lambda \eta\nu}v^\nu=(\nabla_\lambda \nabla_\eta-\nabla_\eta \nabla_\lambda)v^\mu$, and the Einstein equation is $G_{\mu \nu}=-8\pi T_{\mu \nu}$. The d'Alembertian is written $\Box_g = g^{\mu\nu}\nabla_\mu\nabla_\nu$ and work in $n$ spacetime dimensions unless otherwise stated. We adopt units in which $G=c=1$.

\section{Quantization}
\label{sec:quant}

\subsection{Quantization of the real scalar field and Hadamard states}

Throughout the paper, we assume that the spacetime is a smooth 
$n$-dimensional Lorentzian manifold $(M,g)$ that is globally hyperbolic, 
i.e., there are no closed causal curves and the 
intersection $J^+(p)\cap J^-(q)$ of the causal future of $p$ with 
the causal past of $q$ is compact, for all points $p,q\in M$.  
The nonminimally-coupled scalar field is described by the Lagrangian density
\be
\label{eqn:lagrangian}
\mathcal{L}=\frac{\sqrt{-g}}{2} [(\nabla^\mu \phi )\nabla_\mu\phi - (m^2+\xi R)\phi^2 ] \,,
\ee
and obeys the field equation
\begin{equation}
P_\xi\phi = 0, \qquad P_\xi:=\Box_g+m^2+\xi R\,.
\end{equation}
It will be quantized using the algebraic approach, a thorough review of which is to be found in \cite{KhavkineMoretti-aqft}. Thus, quantization proceeds by the introduction of a 
unital *-algebra $\mathscr{A}(M)$ on our manifold $M$, so that self-adjoint elements of $\mathscr{A}(M)$ are observables of the theory.
The algebra is generated by elements $\Phi(f)$, where $f \in \mathscr{D}(M)$, 
the space of complex-valued, compactly-supported, 
smooth functions on M, also denoted $C^{\infty}_0(M)$. The assumption that $(M,g)$ is globally hyperbolic entails the existence of an antisymmetric bi-distribution $E_\xi(x,y)$ which is the difference of the advanced and retarded Green functions for $P_\xi$. The objects $\Phi(f)$ represent smeared quantum fields and are required to obey the following relations:
\begin{itemize}
\item
\textbf{Linearity} \\
The map $f\rightarrow\Phi(f)$ is complex-linear,
\item
\textbf{Hermiticity} \\
$\Phi(f)^* = \Phi(\overline{f})$ \qquad $\forall f \in C^{\infty}_0(M)$,
\item
\textbf{Field Equation} \\
$\Phi(P_\xi f) = 0$  \qquad $\forall f \in C^{\infty}_0(M)$,
\item
\textbf{Canonical Commutation Relations} \\
$\left[\Phi(f), \Phi(h)\right] = iE_\xi (f,h)\mathbb{1}$  \qquad $\forall f,h \in C^{\infty}_0(M)$.
\end{itemize} 
A state of the theory is a linear functional $\omega:\mathscr{A}(M)\to  \mathbb{C}$ with the interpretation that $\omega(A)$ is the expectation value of $A\in\mathscr{A}(M)$ in state $\omega$. Of particular interest is the associated two-point function $W: \mathscr{D}(M)\times \mathscr{D}(M) \rightarrow \mathbb{C}$,
\begin{equation}
W(f,h) = \omega(\Phi(f)\Phi(h))\,.
\end{equation} 
However, the definition of a state includes many that have unphysical properties.
	Moreover, there is no single distinguished state associated to each spacetime that can  act as a generalization of the Minkowski vacuum state. Therefore, what is needed is a class of physically well-behaved states in each spacetime -- a standard choice being the class of Hadamard states. See~\cite{Fewster_artofstate:2018} for a recent review of these issues. 

Hadamard states were originally defined in terms of a short-distance series expansion~\cite{KayWald:1991}, but can also be described as those whose two-point functions are distributions with a given singularity structure specified by its wave-front set, as was first realised by Radzikowski~\cite{Radzikowski:1996}.

As we now briefly recall, the wave-front set $\WF(u)$ of a distribution $u$ on a smooth $m$-dimensional manifold $X$ is a subset of the cotangent bundle $T^*X$ which encodes both positional and directional information concerning the singularities of $u$. A coordinate-independent definition may be given as follows (see~\cite{Duistermaat_FIO,Brouder_etal:2014}) using the convention that, for any smooth real-valued function $\psi$ on $X\times\mathbb{R}^m$ and any fixed $a\in \mathbb{R}^m$, $\psi_a$ denotes the smooth function $\psi_a(x)=\psi(x,a)$ on $X$. 
\begin{definition}
	A point $(x,k)\in T^*X$, with $k\neq 0$, is said to be \emph{regular} for $u$ if for each smooth real-valued function $\psi$ on $X\times \mathbb{R}^m$ with $d\psi_0|_{x}=k$, there are open neighbourhoods $U$ of $x\in X$ and $A$ of $0\in \mathbb{R}^m$ so that 
	\begin{equation}
	\sup_{(\lambda,a)\in(0,\infty)\times A} \lambda^N |u(e^{i\lambda \psi_a}\phi)|<\infty\qquad\text{for all $N\in\mathbb{N}$ and $\phi\in C_0^\infty(U)$.}
	\end{equation} 
	(That is, $u(e^{i\lambda \psi_a}\phi)$ decays faster than any inverse power of $\lambda$ as $\lambda\to+\infty$, uniformly in $a\in A$.)    
	The \emph{wavefront set} $\WF(u)$ is the set of all $(x,k)\in T^*X$ with $k\neq 0$ that are not regular for $u$. 
\end{definition}
We remark that, if $(x,k)$ is regular for $u$, then one may replace $\mathbb{R}^m$ by any $\mathbb{R}^p$ and the same decay properties will continue to hold.

The Hadamard condition may now be stated as follows.
\begin{definition}
	A state $\omega$ is \emph{Hadamard} if its two-point function $W$ is a distribution on $M\times M$ whose wavefront set obeys
	\begin{equation}\label{eq:uSC}
	\WF(W)\subset \mathcal{N}^+\times\mathcal{N}^-,
	\end{equation}
	where $\mathcal{N}^\pm\subset T^*M$ is the set of positive/negative-frequency null covectors and we identify $T^*(M\times M)$ with $T^*M\times T^*M$. 
	[The sign of the frequency of a null covector is given by the sign of its contraction with any future-directed timelike vector].
\end{definition}

A remarkable fact is that the Hadamard condition~\eqref{eq:uSC}, together with the algebraic relations in $\mathscr{A}(M)$, fixes the two-point function up to smooth terms (see \cite{KhavkineMoretti-aqft,Fewster_artofstate:2018} for reviews and original references).
In particular, the difference of any two Hadamard two-point functions is smooth.  

For some purposes, we will consider states that are both Hadamard and \emph{quasifree}, meaning that all odd $n$-point functions vanish and all even $n$-point functions can be expanded as sums of products of the two-point function according to Wick's theorem, giving in particular  
\begin{equation}
\omega(\Phi(f)^n) = i^{-n}\left.\frac{d^n}{d\lambda^n}\exp \left(-\frac{\lambda}{2} W(f,f)\right)\right|_{\lambda=0}, \qquad f\in \mathscr{D}(M) \,.
\end{equation}
Each quasifree state can be represented by the vacuum vector in a suitable
Fock space representation of the algebra.

\subsection{Quantization of Wick polynomials and the stress tensor}

The algebra $\mathscr{A}(M)$ does not contain elements that correspond to smeared local Wick polynomials of degree $2$ and above; in particular, it does not contain smearings of the stress-energy tensor. These objects appear as elements
of an extended algebra $\mathscr{W}(M)$ whose construction is described in~\cite{Hollands:2001nf} and which contains $\mathscr{A}(M)$ as a subalgebra. We sketch only the parts of the discussion needed here,
suppressing many points of detail and slightly changing conventions and notation.

To start, let $\omega$ be a quasifree Hadamard state with two-point function $W$.
Then the algebra $\mathscr{A}(M)$ contains elements of the form
\begin{equation}\label{eq:Wkex}
\nord{\Phi^{\otimes 2}}_\omega(f\otimes f) = \Phi(f)\Phi(f) - W(f,f)\mathbb{1}.
\end{equation}
for any test function $f$, with the property that $\langle\nord{\Phi^{\otimes 2}}_\omega(f\otimes f)\rangle_\omega=0$. More generally, the extended algebra $\mathscr{W}(M)$ contains elements $\nord{\Phi^{\otimes 2}}_\omega(t)$, where $t$ is any symmetric, compactly supported distribution on $M\times M$ whose wave-front set does not contain any points $(x,k;x',k')\in T^*(M\times M)$ in which $k$ and $k'$ are causal covectors that both have positive frequency or both have negative frequency. 
For example, the elements given in Eq.~\eqref{eq:Wkex} correspond to the case in which $f\otimes f$ is regarded as acting on smooth functions $S$ on $M\times M$ by 
\begin{equation}
(f\otimes f)(S) = \int_{M\times M} \dV_x\, \dV_y\, f(x)f(y) S(x,y) .
\end{equation}
Any quasifree Hadamard state $\omega$ extends to $\mathscr{W}(M)$ so that
$\langle\nord{\Phi^{\otimes 2}}_\omega(t)\rangle_\omega=0$ for all $t$ of the type just described.

More generally, distributions that involve (derivatives of) $\delta$-functions may be used to define Wick polynomials. It will be enough for our purposes to introduce quadratic Wick polynomials in the field and its derivatives. Let $f^{\mu_1\cdots \mu_r \nu_1\cdots \nu_s}$ be a smooth compactly supported tensor field and define a compactly
supported distribution $T^{r,s}[f]$ by
\begin{equation}\label{eq:Trs_def}
T^{(r,s)}[f](S) = \int_{M } \dV\, f^{\mu_1\cdots \mu_r \nu_1\cdots \nu_s} \coin{(\nabla^{(r)}\otimes \nabla^{(s )}) S^{\mathrm{sym}}}_{\mu_1\cdots \mu_r \nu_1\cdots \nu_s}.
\end{equation} 
Here, we have written
$S^{\mathrm{sym}}(x,y)=\frac{1}{2} (S(x,y)+S(y,x))$ for the symmetric part of $S\in C^\infty(M\times M)$, while $\nabla^{(r)}$ is a symmetrised $r$-th order covariant derivative and the double square brackets $\coin{\cdot}$ in the integrand denote a coincidence limit. Then we obtain a smeared Wick polynomial
\begin{equation}
\nord{\nabla^{(r)}\Phi \nabla^{(s)}\Phi}_\omega (f):= \nord{\Phi^{\otimes 2}}_\omega(T^{r,s}[f]),
\end{equation}
which depends on the reference state $\omega$. As a matter of fact, one has 
\begin{equation}
\nord{\nabla^{(r)}\Phi \nabla^{(s)}\Phi}_{\omega} (f)= \nord{\nabla^{(r)}\Phi \nabla^{(s)}\Phi}_{\omega'} (f) + 
T^{r,s}[f](W'- W)\mathbb{1},
\end{equation}
if $\omega'$ is another quasifree Hadamard state with two-point function $W'$. 
Taking expectations in the state $\omega'$ yields
\begin{equation}
\langle \nord{\nabla^{(r)}\Phi \nabla^{(s)}\Phi}_{\omega} (f)\rangle_{\omega'} = 
T^{r,s}[f](W'-W),
\end{equation}
which reproduces the usual point-splitting regularisation for normal ordering with respect to $\omega$. 

Standard results concerning coincidence limits may be used to manipulate expressions of the form $T^{r,s}[f](S)$. For example, the identity
\begin{equation}
Y^\mu Z^\nu  \coin{C}_{\mu\nu} = \coin{Y^\mu Z^{\nu'}C_{\mu\nu'}}
\end{equation}
satisfied by continuous vector fields $Y^\mu$, $Z^\mu$ and bi-covector field $C_{\mu\nu'}(x,x')$ 
implies that
\begin{equation}\label{eq:T11id}
T^{1,1}[(Y\otimes Z) f](S) = T^{0,0}[f]((\nabla_Y\otimes \nabla_Z)S).
\end{equation}
Similarly, if $C$ is now a bi-tensor field $C_{\mu'\nu'}(x,x')$ [i.e., scalar type with respect to $x$, and second rank covariant with respect to $x'$], the identity
\begin{equation}
Y^\mu Z^\nu  \coin{C}_{\mu\nu} = \coin{Y^{\mu'} Z^{\nu'}C_{\mu'\nu'}}
\end{equation}
implies
\begin{equation}\label{eq:T02id}
T^{0,2}[(Y\otimes Z) f](S) = T^{0,0}[f]((1\otimes Y^{\mu'}Z^{\nu'}\nabla_{(\mu'}\nabla_{\nu')}) S). 
\end{equation}  

The dependence of the above normal-ordered expressions on $\omega$ is unsatisfactory, because of the lack of a canonical choice of a Hadamard state in a general curved spacetime. What is needed, therefore, is
a prescription for finding algebra elements that qualify as local and covariant Wick powers. This might be done in various ways, reflecting finite renormalisation freedoms. Hollands and Wald~\cite{Hollands:2001nf,Hollands:2004yh} set out a list of axioms (labelled T1--T11) that should be 
obeyed by any reasonable scheme and which moreover encompasses time ordered expressions. Among their
requirements is a form of Leibniz' rule (T10) (related to the `action Ward identity'~\cite{DutschFredenhagen:2004}) which in our case implies, for example, that
\begin{equation}\label{eq:egLeibniz0}
(\nabla_\mu\Phi^2)(f^\mu) = 2(\Phi\nabla_\mu\Phi)(f^\mu)
\end{equation}
and 
\begin{equation}\label{eq:egLeibniz}
\frac{1}{2}(\nabla_\mu \nabla_\nu (\Phi^2))(f^{\mu\nu}) = (\nabla_\mu\Phi\nabla_\nu\Phi)(f^{\mu\nu}) + (\Phi\nabla_{(\mu}\nabla_{\nu)}\Phi)(f^{\mu\nu}),
\end{equation}
where in each case the left-hand side is understood distributionally, i.e., 
\begin{equation}
	\label{eqn:Leibdist}
(\nabla_\mu\Phi^2)(f^\mu) = -\Phi^2(\nabla_\mu f^\mu), \qquad
(\nabla_\mu \nabla_\nu (\Phi^2))(f^{\mu\nu})  = (\Phi^2)(\nabla_\nu\nabla_\mu f^{\mu\nu}) .
\end{equation} 

While the Leibniz rule must hold for all Wick ordering prescriptions, it is not generally true that
the field equation can be imposed inside Wick ordered expressions. In fact, Hollands and Wald showed~\cite{Hollands:2004yh} that
one cannot consistently impose both $\Phi P_\xi \Phi = 0$ and $(\nabla_a\Phi) P_\xi\Phi = 0$ in 
$n=4$ spacetime dimensions, and that the latter cannot be imposed in $n=2$ dimensions by any prescription obeying their axioms. However, these fields are at least given by local curvature tensors (of an appropriate rank and engineering dimension) multiplied by the identity element. For example, one has 
\begin{equation}\label{eq:Q}
(\Phi P_\xi \Phi)(f) = \int_M \dV\, f Q\, \mathbb{1}, 
\end{equation}
where $Q$ is a scalar quantity, locally and covariantly constructed from the metric (including curvature tensors and covariant derivatives thereof) and the parameters $m^2$ and $\xi$, and with overall engineering dimension of $n-2$ powers of inverse length. The dependence on $m^2$ and $\xi$ is restricted in certain ways.

Turning to the stress-energy tensor, the classical expression obtained
by varying the action derived from~\eqref{eqn:lagrangian} with respect to the metric 
is
\be
\label{eqn:tmunu}
T_{\mu \nu}=(\nabla_\mu \phi)(\nabla_\nu \phi)+\frac{1}{2} g_{\mu \nu} (m^2 \phi^2-(\nabla \phi)^2)+\xi(g_{\mu \nu} \Box_g-\nabla_\mu \nabla_\nu-G_{\mu \nu}) \phi^2 \,,
\ee
where $G_{\mu \nu}$ is the Einstein tensor. The stress-energy tensor can be 
expressed in terms of $\phi^2$ and $\phi\nabla_\mu\nabla_\nu\phi$, using Leibniz' rule but without using the field equation.
Therefore any Wick ordering prescription gives a quantized stress-energy tensor in terms of the 
Wick ordered expressions 
$(\Phi^2)$ and $(\Phi\nabla_{(\mu}\nabla_{\nu)}\Phi)$. 
Classically, Leibniz' rule also gives $\nabla^\mu T_{\mu\nu}= (\nabla_\nu\phi) P_\xi\phi$, so the conservation of the quantized stress-energy tensor
requires that a prescription with $((\nabla_\mu\Phi) P_\xi\Phi) = 0$ is adopted (and therefore, in this approach, the stress-energy tensor cannot be conserved in $n=2$ dimensions). It turns out that stress-energy tensor conservation (if $n>2$) is a consequence of the requirement (T11) imposed by Hollands and Wald~\cite{Hollands:2004yh}, which is inspired by a `principle of perturbative agreement' and furthermore guarantees conservation of the stress-energy tensor in perturbatively constructed interacting models.  

A prescription meeting the requirements discussed so far may be given as follows. First, let $H$ be a local, symmetric Hadamard parametrix, defined near the diagonal in $M\times M$. (For the definition and properties of the Hadamard parametrix, see for example~\cite{KhavkineMoretti-aqft} and references given there.) Then the prescription
\begin{equation}
(\nabla^{(r)}\Phi\nabla^{(s)}\Phi)_H (f) = \nord{\nabla^{(r)}\Phi\nabla^{(s)}\Phi}_\omega (f)  + 
T^{r,s}[f](W^\textrm{sym} - H)\mathbb{1}
\end{equation}
satisfies requirements (T1--10) but not (T11).
Because the distribution $T^{r,s}[f]$ is supported on the diagonal in $M\times M$, the fact that $H$ is only defined near the diagonal is harmless. In this prescription, it is known~\cite{Moretti:2001qh} that 
\begin{equation}
((\nabla_\mu\Phi) P_\xi\Phi)_H = \frac{n}{2(n+2)}\nabla_\mu Q \mathbb{1}
\end{equation}	
if $Q$ is defined as in~\eqref{eq:Q} for $(\Phi P_\xi\Phi)_H$.  
Therefore, adopting a prescription in which $(\Phi^2)=(\Phi^2)_H$ but 
\begin{equation}
(\Phi\nabla_{(\mu}\nabla_{\nu)}\Phi)   = (\Phi\nabla_{(\mu}\nabla_{\nu)}\Phi)_H - \frac{n}{n^2-4} g_{\mu\nu} Q\mathbb{1}
\end{equation}
will result in a stress-energy tensor that is automatically conserved.\footnote{Here, we have corrected the corresponding expression in \cite{Hollands:2004yh} which contains some errors.} Of course, the definition of
$(\nabla\Phi\nabla\Phi)$ must now differ from $(\nabla\Phi\nabla\Phi)_H$ in order to protect the Leibniz rule
\eqref{eq:egLeibniz},
\begin{equation}
(\nabla_\mu\Phi\nabla_\nu\Phi) = (\nabla_\mu\Phi\nabla_\nu\Phi)_H + \frac{n}{n^2-4} g_{\mu\nu} Q\mathbb{1}\,.
\end{equation}
Likewise, the prescription for a derivative $(\Phi\nabla\Phi)$ is fixed
by the Leibniz rule
\begin{equation}
(\Phi\nabla\Phi) = \frac{1}{2}\nabla (\Phi^2).  
\end{equation}

 Hollands and Wald showed that one can inductively modify the Hadamard prescription so that all Wick and time ordered expressions are consistent with all their requirements~\cite{Hollands:2004yh}. There remain further finite renormalisation freedoms, for example, in selecting a length scale that is needed in the construction of $H$. These result in the freedom to add multiples of $\mathbb{1}$ to $T_{\mu\nu}$, given by conserved local curvature terms.

This construction supplants the older viewpoint, see e.g.~\cite{Wald:1995yp}, in which one renormalises the
stress-energy tensor directly using point-splitting and a Hadamard subtraction and then makes an \emph{ad hoc}
modification to fix the failure of conservation. Instead, conservation follows from wider requirements
on the time ordering prescription. If one is only interested in defining the stress-energy tensor, one can proceed alternatively by modifying the classical expression for $T_{\mu\nu}$, adding a term proportional to $g_{\mu\nu} \phi P_\xi\phi$ that vanishes on shell, as shown by Moretti~\cite{Moretti:2001qh}. This gives identical results to the more general Hollands--Wald prescription~\cite{Hollands:2004yh} in $n=4$ dimensions, but not in $n=2$. 

When taking differences of expectation values, multiples
of the unit cancel. Therefore, if $\omega$ and $\omega'$ are quasifree Hadamard states, 
the difference in expectation values of any quadratic Wick expression is given
by the point-splitting result
\begin{align}
\langle (\nabla^{(r)}\Phi\nabla^{(s)}\Phi)(f)\rangle_{\omega'} - 
\langle (\nabla^{(r)}\Phi\nabla^{(s)}\Phi)(f)\rangle_{\omega} &= 
\langle \nord{(\nabla^{(r)}\Phi\nabla^{(s)}\Phi)}_{\omega}(f)\rangle_{\omega'} \notag\\
&= T^{r,s}[f](W'-W).
\end{align}
Furthermore, because the difference $W'-W$ is a smooth
bisolution to the operator $P_\xi$, one can use the field equation in the sense that
\begin{equation}
\langle (\nabla^{(r)}\Phi P_\xi\Phi)(f)\rangle_{\omega'} - 
\langle (\nabla^{(r)}\Phi P_\xi\Phi)(f)\rangle_{\omega} = 0.
\end{equation}
That is, while the quadratic Wick ordered expressions obey Leibniz' rule, but not generally the field equation, the differences in their expectation values obey both. 
This gives us the freedom to quantise classical expressions in the most convenient
fashion for proving quantum energy inequalities.

\subsection{Quantization of the effective energy density}

The main observable of interest will be the EED, classically defined by~\eqref{eq:EED}, where $U^\mu$ is the velocity field of a family of observers. As a quantum field, $\rho_U$ may be defined by 
\begin{equation}
\rho_U(f) = T_{\mu\nu}\left(\left(U^\mu U^\nu - \frac{g^{\mu\nu}}{n-2}\right)f\right),
\end{equation}
where $T_{\mu\nu}$ 
is the quantized stress tensor constructed as described above. Using the Leibniz rule, $\rho_U(f)$ may be written in various ways, which will be useful for different purposes. If the second derivatives of $\Phi^2$ arising from Eq.~\eqref{eqn:tmunu} are expanded using Eq.~\eqref{eq:egLeibniz}, the expression
 \begin{align}
	\label{eqn:EED1}
\rho_U(f) &= (\nabla_\mu\Phi\nabla_\nu\Phi)\left(\left( (1-2\xi)U^\mu U^\nu -\frac{2\xi g^{\mu\nu}}{n-2}\right)f \right) - 2\xi(\Phi\nabla_{(\mu}\nabla_{\nu)}\Phi)(U^\mu U^\nu f) \notag\\
&\qquad +\Phi^2\left(\left( \xi \mathcal{R}_\xi -\frac{1-2\xi}{n-2} m^2  \right)f\right) - \frac{2\xi}{n-2}(\Phi P_\xi\Phi)(f),
\end{align}
is obtained, where we have also used the definition of $P_\xi$ (but not the
field equation) to absorb the $\Phi\Box_g\Phi$ term, and defined
\begin{equation}\label{eq:R}
\mathcal{R}_\xi=\frac{2\xi}{n-2} R  -R_{\mu \nu} U^\mu U^\nu\,.
\end{equation}
Eq.~\eqref{eqn:EED1} corresponds to a classical expression used in~\cite{Brown:2018hym}. On the other hand, if Eq.~\eqref{eq:egLeibniz} is applied again, along with Eq.~\eqref{eqn:Leibdist}, to rewrite Eq.~\eqref{eqn:EED1} in terms of
the Wick polynomials $\Phi^2$, $\nabla_\mu\Phi\nabla_\nu\Phi$ and $\Phi P_\xi\Phi$ alone, we find
\begin{align}
\label{eqn:EED1b}
\rho_U(f) &= (\nabla_\mu\Phi\nabla_\nu\Phi)\left(\left( U^\mu U^\nu -\frac{2\xi g^{\mu\nu}}{n-2}\right)f \right)\notag\\
&\qquad  -\Phi^2\left(\xi \nabla_\nu\nabla_\mu (U^\mu U^\nu f)+\left(\frac{1-2\xi}{n-2} m^2 - \xi \mathcal{R}_\xi\right)f\right) - \frac{2\xi}{n-2}(\Phi P_\xi\Phi)(f)\,.
\end{align} 

Alternatively, the mass term in Eq.~\eqref{eqn:EED1b} may be traded for additional terms involving $P_\xi$, $\Box_g$ and the Ricci scalar, giving
\begin{align}
\label{eqn:EED2}
\rho_U(f) &= (\nabla_\mu\Phi\nabla_\nu\Phi)\left(\left( U^\mu U^\nu -\frac{g^{\mu\nu}}{n-2}\right)f \right) + \frac{1-2\xi}{2(n-2)} \Phi^2(\Box_g f)   \notag\\
& \qquad -\xi \Phi^2\left( \nabla_\nu\nabla_\mu (U^\mu U^\nu f)-\mathcal{R}_{1/2} f\right) - \frac{1}{n-2}(\Phi P_\xi\Phi)(f),
\end{align}
 in which the mass parameter appears only in the last term. The curvature term $\mathcal{R}_{1/2}$ is just $\mathcal{R}_\xi$ with $\xi=1/2$. The three expressions for $\rho_U$ are all equivalent, but have different advantages as starting points for quantum energy inequalities. 

We will be interested in expectation values of the quantized EED in state $\omega'$, normal ordered relative to a reference Hadamard state $\omega$,
\begin{equation}
	\label{eqn:expecteed}
\langle\nord{\rho_U}_{\omega}(f)\rangle_{\omega'} = \langle  \rho_U (f)\rangle_{\omega'} - \langle  \rho_U (f)\rangle_{\omega}\,.
\end{equation}
 Each term in the above expressions~\eqref{eqn:EED1},~\eqref{eqn:EED1b} or~\eqref{eqn:EED2} may then be written in terms of distributions $T^{r,s}[\cdot]$ acting on the difference of the two-point functions $S=W'-W$. By further manipulation, they may all be expressed in terms of $T^{0,0}[f]$ acting on suitable derivatives of $S$. For instance, if $V^\mu$ is any smooth vector field,
\begin{equation}\label{eq:Wick_id1}
 \langle \nord{\nabla_\mu\Phi\nabla_\nu\Phi}_{\omega}(V^\mu V^\nu f)\rangle_{\omega'}
 = T^{1,1}[(V\otimes V) f](S) = T^{0,0}[f]((\nabla_V\otimes\nabla_V)S)\,,
\end{equation}
where we have used the identity~\eqref{eq:T11id}. Similarly, if $e^\mu_a$ ($a=0,\ldots,n-1$)
is an $n$-bein defined on the support of $f$ with $e_0$ timelike, we also find 
\begin{equation}\label{eq:Wick_id2}
\langle \nord{\nabla_\mu\Phi\nabla_\nu\Phi}_{\omega}(g^{\mu\nu} f)\rangle_{\omega'} = 
T^{0,0}[f]((\nabla_{e_0}\otimes\nabla_{e_0})S) - \sum_{a=1}^{n-1} T^{0,0}[f]((\nabla_{e_a}\otimes\nabla_{e_a})S) .
\end{equation}
Finally, the identity~\eqref{eq:T02id} yields 
\begin{align}\label{eq:Wick_id3}
\langle (\nord{\Phi\nabla_{(\mu}\nabla_{\nu)}\Phi}_{\omega})(U^\mu U^\nu f)\rangle_{\omega'} &= T^{0,2}[(U\otimes U) f](S) =
T^{0,0}[f]((1\otimes_{\mathfrak{s}} U^\mu U^\nu \nabla_\mu \nabla_\nu) S) \notag\\
&=
T^{0,0}[f]((1\otimes_{\mathfrak{s}} U^\mu U^\nu \nabla_\mu \nabla_\nu) S) \notag\\
&=
T^{0,0}[f]((1\otimes_{\mathfrak{s}} (\nabla_U^2 - \nabla_{\nabla_U U})) S),
\end{align}
where $\otimes_{\mathfrak{s}}$ is the symmetrised tensor product $P \otimes_{\mathfrak{s}} P'=[ (P \otimes P')+(P' \otimes P)]/2$. Note that expectation values of $\nord{\Phi P_\xi \Phi}$ vanish. 
In this way, the expectation values of normal ordered quantities may be reduced to coincidence limits of certain differential operators acting on the difference of two-point functions. For instance,
\begin{equation}
\langle V^\mu V^\nu\nord{\nabla_\mu\Phi\nabla_\nu\Phi}_{\omega} \rangle_{\omega'}(x)
= \coin{(\nabla_V\otimes\nabla_V)(W'-W)}(x).
\end{equation}
 This is just as in the traditional viewpoint of renormalisation by point-splitting~\cite{Wald:1995yp}, but with the advantage that there is a systematic framework from which suitable differential operators in question may be derived, rather than simply being asserted as  Ans\"atze. Although we used vielbeins to treat terms of the form $\nord{(\nabla\Phi)^2}$, it would be equally valid to use parallel propagators, leading ultimately to the same expectation values in the end.

\section{Quantum strong energy inequalities}
\label{sec:SQEI} 

Having described in detail how the EED may be quantised, we now turn to the derivation of QSEIs for averaging along timelike worldlines or spacetime volumes.

\subsection{Worldline}
\label{sub:qline} 
Let $\gamma$ be a smooth timelike curve parametrized by proper time $\tau$. 
Choose any smooth $n$-bein $e_a$ ($a=0,\ldots,n-1)$ on a tubular neighbourhood $\mathcal{T}$ of $\gamma$, so that $U^\mu=e_0^\mu$ is everywhere timelike and agrees with $\dot{\gamma}^\mu$ on $\gamma$.
Fix a Hadamard reference state $\omega_0$ with $2$-point function $W_0$ and, for brevity, denote all
quantities normal-ordered relative to $\omega_0$ by $\nord{X}$, rather than $\nord{X}_{\omega_0}$.
Using the procedure described in the previous subsection, the expectation values of the effective energy density $\nord{\rho_U}$ in Hadamard state $\omega$ can be written in terms of the coincidence limits acting on $\nord{W}=W-W_0$. 
Eq.~(\ref{eqn:EED1}), together with identities~\eqref{eq:Wick_id1}, \eqref{eq:Wick_id2} and \eqref{eq:Wick_id3}), gives
\begin{equation}\label{eqn:rhoquantum}
\langle \nord{\rho_{U}}\rangle_\omega =  \coin{ \hat{\rho}_1 \nord{W} }  +\coin{ \hat{\rho}_2 \nord{W}}  + \left(\xi\mathcal{R}_\xi-\frac{1-2\xi}{n-2} m^2 \right) \coin{\nord{W}}
\end{equation}
along $\gamma$, where the operators $\hat{\rho}_i$ are given by 
\blea
\label{eqn:lineoper}
\hat{\rho}_1&=&\left( 1-2\xi \frac{n-1}{n-2} \right) (\nabla_{U} \otimes \nabla_{U} )+\frac{2\xi}{n-2} \sum_{a=1}^{n-1} (\nabla_{e_a} \otimes \nabla_{e_a})  \,, \\
\hat{\rho}_2&=& -2\xi (\mathbb{1} \otimes_{\mathfrak{s}} U^\mu U^\nu \nabla_\mu \nabla_\nu )  
\,.
\elea 
We have used the fact that the field equation holds for normal-ordered expressions. 
Note that $\mathcal{R}_\xi$ vanishes for Ricci-flat spacetimes.

Our aim is to establish QEI lower bounds on the averaged EED along $\gamma$,
\be 
\langle \nord{\rho_U} \circ \gamma \rangle_\omega (f^2)=\int d\tau f^2 (\tau) \langle  \nord{\rho_U}\rangle_\omega   (\gamma(\tau)) \,,
\ee 
where $f \in \mathcal{D} (\mathbb{R}, \mathbb{R})$ is a real valued test function. 
The contributions arising from the three terms in~\eqref{eqn:rhoquantum} will be handled in differing ways. Note first that all the terms in $\hat{\rho}_1$ take the form
$Q\otimes Q$ for some partial differential operator $Q$ with real coefficients, 
provided $\xi \in [0,2\xi_c]$, where
\begin{equation}
\xi_c=\frac{n-2}{4(n-1)}
\end{equation}
is the value of $\xi$ corresponding to conformal coupling.
The contribution of the terms deriving from $\hat{\rho}_1$ to the averaged EED can be bounded from below, uniformly in $\omega$, using the methods of~\cite{Fewster:1999gj} (see also Lemma~\ref{lem:Qline} below). A key point here is that operators $Q\otimes Q$ map any positive type bi-distribution to another positive type bi-distribution. By contrast, the the mass term is negative definite for $\xi<1/2$, while the geometric term $\mathcal{R}_\xi$ has no
definite sign in general and for this reason cannot be bounded below by a state-independent QEI. (However, see the remarks following Theorem~\ref{the:linegen}.) This leaves $\hat{\rho}_2$, the contribution of which can be manipulated to a more convenient form using the following lemma.
\begin{lemma}
	If $F$ is a smooth function on $\mathcal{T}\times\mathcal{T}$ and $f\in C_0^\infty(\mathbb{R})$ then
	\begin{align}
	\int d\tau f(\tau)^2 \coin{(\mathbb{1} \otimes_{\mathfrak{s}} U^\mu U^\nu \nabla_\mu\nabla_\nu) F}(\gamma(\tau)) 
	&= - \int d\tau  \coin{(\partial\otimes\partial)\left((f\otimes f)\phi^* F\right)} (\tau)  \nonumber\\  &\qquad
	+\int d\tau   f'(\tau)^2 \coin{F} (\gamma(\tau)) \nonumber\\
	&\qquad -\frac{1}{2}\int d\tau \, f(\tau)^2 (\nabla_A\coin{F})(\gamma(\tau))  \,, 
	\end{align}
	where $\phi^*$ denotes a pull-back by $\phi(\tau,\tau')=(\gamma(\tau),\gamma(\tau'))$,
	$A^\mu=\nabla_U U^\mu$ is the acceleration field of $U$ and $\partial$ denotes the derivative on $\mathbb{R}$.
\end{lemma}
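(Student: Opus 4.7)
The plan is to trade the covariant operator $U^\mu U^\nu\nabla_\mu\nabla_\nu$ along $\gamma$ for a proper-time second derivative, with a first-derivative correction involving the acceleration. Concretely, for any smooth scalar $g$ on $\mathcal{T}$, two applications of the chain rule give
\begin{equation}
\frac{d^2}{d\tau^2}(g\circ\gamma)(\tau) = \bigl[U^\mu U^\nu\nabla_\mu\nabla_\nu g + A^\mu \nabla_\mu g\bigr](\gamma(\tau)),
\end{equation}
which uses $\nabla_U U|_\gamma = A$; this rests on the hypothesis that $U = e_0$ agrees with $\dot\gamma$ along the curve. Rearranged, this expresses $U^\mu U^\nu\nabla_\mu\nabla_\nu g$ on $\gamma$ as a pure $d^2/d\tau^2$ minus an acceleration piece.

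Setting $\tilde F := \phi^* F$, I would apply this identity in each argument of $F$ separately and then take the coincidence limit along $\gamma$. The two orderings produce $\partial_{\tau'}^2\tilde F(\tau,\tau)$ and $\partial_\tau^2\tilde F(\tau,\tau)$, together with two single-derivative acceleration contributions that combine, via the Leibniz rule for coincidence limits $\coin{\mathbb{1}\otimes\nabla_\mu F}+\coin{\nabla_\mu F\otimes\mathbb{1}}=\nabla_\mu\coin{F}$, into a single $\nabla_A\coin{F}$ term. Symmetrising and halving yields the pointwise identity
\begin{equation}
\coin{(\mathbb{1}\otimes_{\mathfrak{s}} U^\mu U^\nu\nabla_\mu\nabla_\nu) F}(\gamma(\tau)) = \tfrac12\bigl(\partial_\tau^2+\partial_{\tau'}^2\bigr)\tilde F(\tau,\tau) - \tfrac12\nabla_A\coin{F}(\gamma(\tau)).
\end{equation}

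From here I would use the ordinary chain rule, $\frac{d^2}{d\tau^2}\coin{F}\circ\gamma=(\partial_\tau+\partial_{\tau'})^2\tilde F(\tau,\tau)$, to replace the pure-second-derivative sum by $\frac{d^2}{d\tau^2}\coin{F}\circ\gamma-2\coin{(\partial\otimes\partial)\phi^*F}$, multiply by $f(\tau)^2$, and integrate in $\tau$. Matching against the right-hand side of the lemma is achieved by expanding $\coin{(\partial\otimes\partial)[(f\otimes f)\phi^*F]}$ via Leibniz, which produces $f'(\tau)^2\coin{F}\circ\gamma$, the mixed piece $f(\tau)f'(\tau)\frac{d}{d\tau}\coin{F}\circ\gamma=\tfrac12(f^2)'(\coin{F}\circ\gamma)'$, and $f(\tau)^2\coin{(\partial\otimes\partial)\phi^*F}$. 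Integrating the mixed piece once by parts (legal because $f\in C_0^\infty(\mathbb{R})$) gives $-\tfrac12\int d\tau\,f^2(\coin{F}\circ\gamma)''$, which is precisely the term needed to absorb the proper-time second derivative produced above, leaving the identity of the lemma.

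The only delicate point is keeping track of the extension of $U$ off $\gamma$ and correctly identifying $\nabla_U U|_\gamma$ with the kinematic acceleration of $\gamma$; this is secured by the set-up with $U=e_0^\mu$ matching $\dot\gamma^\mu$ on the curve. Beyond that the argument is an elementary combination of the chain rule and compactly supported integration by parts, with no boundary contributions.
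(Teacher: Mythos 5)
Your proof is correct. The pointwise identity you derive,
\begin{equation}
\coin{(\mathbb{1}\otimes_{\mathfrak{s}} U^\mu U^\nu\nabla_\mu\nabla_\nu) F}(\gamma(\tau)) = \tfrac12\bigl(\partial_\tau^2+\partial_{\tau'}^2\bigr)(\phi^*F)(\tau,\tau) - \tfrac12\,\nabla_A\coin{F}(\gamma(\tau)),
\end{equation}
follows exactly as you say from the chain rule in each slot together with Synge's rule for the first-derivative coincidence limits, and the subsequent bookkeeping --- replacing $\partial_\tau^2+\partial_{\tau'}^2$ by $(d/d\tau)^2\coin{F}\circ\gamma - 2\coin{(\partial\otimes\partial)\phi^*F}$, expanding $\coin{(\partial\otimes\partial)[(f\otimes f)\phi^*F]}$ by Leibniz, and integrating the cross term $\tfrac12(f^2)'\,(\coin{F}\circ\gamma)'$ by parts --- closes the argument with no gaps. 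Your route differs mechanically from the paper's: there, one chooses an extension $f_\mathcal{T}$ of $f$ to the tubular neighbourhood and invokes a pre-packaged covariant identity (Eq.~(38) of Ref.~\cite{Fewster:2007ec}, itself a consequence of Synge's rule) whose $\nabla_U\coin{\cdots}$ terms are total $\tau$-derivatives that drop out upon integration along $\gamma$; the correction from $\nabla_U^2$ to $U^\mu U^\nu\nabla_\mu\nabla_\nu$ is then made at the end. Your version avoids both the extension of $f$ off the curve and the appeal to the external identity, working entirely with the pulled-back function on $\mathbb{R}^2$ and a single explicit integration by parts; the underlying computation is the same, but your presentation is more self-contained and makes transparent exactly where the $f'^2$ term and the acceleration term originate.
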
	
\begin{proof}
	First, choose $f_\mathcal{T} \in \mathcal{D}(\mathcal{T},\mathbb{R})$ such that $f_\mathcal{T} \circ \gamma=f$. Then, slightly simplified, the identity Eq.~(38) of Ref.~\cite{Fewster:2007ec} (a consequence of Synge's rule
	$\nabla_V\coin{H}=2\coin{(\mathbb{1}\otimes_{\mathfrak{s}}\nabla_V)H}$) gives
	\begin{align} \label{eqn:identity}
	& 2f_\mathcal{T}^2\coin{(\mathbb{1} \otimes_{\mathfrak{s}} \nabla_{U}^2) F}+\nabla_{U} \coin{(\mathbb{1} \otimes_{\mathfrak{s}} (\nabla_{U} f_\mathcal{T}^2) )F} \nonumber\\
&\qquad\qquad\qquad\qquad\qquad   =-2 \coin{(\nabla_{U} \otimes \nabla_{U}) ((f_\mathcal{T} \otimes f_\mathcal{T} )F) }  +2(\nabla_{U} f_\mathcal{T})^2 \coin{  F} \nonumber\\ 
	&\qquad \qquad\qquad\qquad\qquad\qquad
	+2 \nabla_{U} \coin{(\mathbb{1} \otimes_{\mathfrak{s}} \nabla_{U})  ((f_\mathcal{T} \otimes f_\mathcal{T} )F)} \,.
	\end{align} 
	Integrating both sides along $\gamma$ and dividing by $2$, we have
	\begin{align}
	\int d\tau f(\tau)^2 \coin{(\mathbb{1} \otimes_{\mathfrak{s}} \nabla_{U}^2) F}(\gamma(\tau)) &=
	- \int d\tau  \coin{(\partial\otimes\partial)\left((f\otimes f)\phi^* F\right)} (\tau) \nonumber\\&\qquad\qquad 
	+\int d\tau ( f'(\tau))^2 \coin{\phi^* F} (\tau)  \,,
	\end{align}
	and the result follows on noting that 
	\begin{equation}
	\coin{(\mathbb{1} \otimes_{\mathfrak{s}} U^\mu U^\nu \nabla_\mu\nabla_\nu) F} = 
	\coin{(\mathbb{1} \otimes_{\mathfrak{s}} \nabla_U^2) F} - \frac{1}{2}\nabla_A \coin{F}
	\end{equation}
	using the Leibniz and Synge rules. 
\end{proof}
It follows that 
\begin{align}\label{eq:rho2expr}
\int d\tau f(\tau)^2 \coin{ \hat{\rho}_2 \nord{W}}(\gamma(\tau)) &=
2\xi \int d\tau  \coin{(\partial\otimes\partial)\left((f\otimes f)\phi^* \nord{W} \right) } (\tau) \nonumber\\&\qquad
-2\xi \int d\tau  \,  f'(\tau)^2 \langle \nord{\Phi^2} \rangle_\omega(\gamma (\tau)) \nonumber\\
&\qquad
+\xi \int d\tau f(\tau)^2 (\nabla_A \langle\nord{\Phi^2} \rangle_\omega)(\gamma (\tau))  . 
\end{align}
The first term in this expression can be estimated as a special case ($k=0$, $Q=\mathbb{1}$) of the following result, which can also be used to bound all terms arising from $\hat{\rho}_1$. (For the benefit of the reader, we note that on the right-hand side of~\eqref{eq:lem2}, the functions $\bar{f_\alpha}$ and $f_\alpha$ are substituted into a bidistribution obtained as a pull-back of $(Q\otimes Q) W_0$.)
\begin{lemma}
	\label{lem:Qline}
	Let $Q$ be a partial differential operator on $M$ with smooth real coefficients and $k\in\mathbb{N}_0$.
	Then the inequality
	\begin{align}\label{eq:lem2}
	\lefteqn{\int d\tau  \coin{(\partial^k\otimes\partial^k)\left((f\otimes f)\phi^* ((Q\otimes Q)\nord{W})\right) } (\tau) } \qquad\qquad\qquad\qquad\qquad\qquad &\notag \\ &\ge  - \int_0^\infty \frac{d\alpha}{\pi} \alpha^{2k} \left(\phi^*((Q\otimes Q) W_0)\right)(\bar{f_\alpha}, f_\alpha) > -\infty\,,
	\end{align}  
	holds for all Hadamard states $\omega$ and real-valued test functions $f\in\mathscr{D}(\mathbb{R},\mathbb{R})$,
	where $f_\alpha(\tau)=e^{i\alpha \tau} f(\tau)$.
\end{lemma}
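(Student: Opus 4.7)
The plan is to express the left-hand side as a Fourier integral in a single parameter and then exploit positivity of the two-point function. To begin, note that $\widetilde{S} := \phi^*((Q\otimes Q)\nord{W})$ is smooth on $\mathbb{R}^2$: $\nord{W} = W - W_0$ is smooth as the difference of two Hadamard two-point functions, and both $Q\otimes Q$ and the pullback along $\phi(\tau,\tau') = (\gamma(\tau),\gamma(\tau'))$ preserve smoothness. Consequently $(f\otimes f)\widetilde{S}$ is compactly supported and smooth, with Schwartz Fourier transform. A direct Fourier-inversion argument, in which the two factors of $\partial^k$ contribute $\alpha^k \alpha'^k$ and the $\tau=\tau'$ evaluation yields a $\delta(\alpha+\alpha')$, will establish
\[
L := \int d\tau\, \coin{(\partial^k\otimes\partial^k)((f\otimes f)\widetilde{S})}(\tau) = \int_{-\infty}^{\infty} \frac{d\alpha}{2\pi}\,\alpha^{2k}\, \widetilde{S}(\bar{f_\alpha}, f_\alpha),
\]
after using $\bar{f_{-\alpha}} = f_\alpha$ for real $f$ to rearrange the arguments.

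The crucial simplification is that $\widetilde{S}$ is \emph{symmetric} as a bidistribution. This follows from the CCR: the antisymmetric parts $W(x,y) - W(y,x) = iE_\xi(x,y)$ and $W_0(x,y) - W_0(y,x) = iE_\xi(x,y)$ coincide (since $E_\xi$ is state-independent) and so cancel in $\nord{W}$. Symmetry is preserved by $Q\otimes Q$ (the same operator in both slots) and by the pullback, so $\widetilde{S}(h_1,h_2) = \widetilde{S}(h_2,h_1)$ for compatible test functions. Combined with $\bar{f_{-\alpha}} = f_\alpha$ this implies $\widetilde{S}(\bar{f_\alpha}, f_\alpha)$ is an even function of $\alpha$, so
\[
L = \int_0^{\infty} \frac{d\alpha}{\pi}\,\alpha^{2k}\, \widetilde{S}(\bar{f_\alpha}, f_\alpha).
\]

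Positivity now enters. The state property $\omega(A^*A)\geq 0$ gives $W(\bar{h}, h)\geq 0$; since $Q$ has real coefficients, $Q^T\bar{h} = \overline{Q^T h}$, and so $((Q\otimes Q)W)(\bar{h}, h) = W(\overline{Q^T h}, Q^T h)\geq 0$. The Hadamard wavefront condition ensures the pullback is well-defined on the diagonal embedding of the timelike worldline and preserves positivity, yielding $\phi^*((Q\otimes Q)W)(\bar{f_\alpha}, f_\alpha)\geq 0$. Writing $\widetilde{S} = \phi^*((Q\otimes Q)W) - \phi^*((Q\otimes Q)W_0)$ then gives the pointwise bound $\widetilde{S}(\bar{f_\alpha}, f_\alpha) \geq -\phi^*((Q\otimes Q)W_0)(\bar{f_\alpha}, f_\alpha)$, and inserting this into the Fourier representation produces the claimed inequality.

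The hard part will be verifying the finiteness assertion: $\int_0^\infty \alpha^{2k}\,\phi^*((Q\otimes Q)W_0)(\bar{f_\alpha}, f_\alpha)\, d\alpha < \infty$. This rests on a microlocal argument. The Hadamard condition places $\WF(W_0) \subset \mathcal{N}^+\times\mathcal{N}^-$, and pulling back along $\phi$ confines the wavefront set of the pulled-back distribution to a definite sign of the cotangent covectors on $\mathbb{R}$. The Fourier content of $\bar{f_\alpha}\otimes f_\alpha$ is concentrated in the opposite cone as $\alpha\to+\infty$, so the pairing decays faster than any inverse power of $\alpha$, defeating the polynomial weight $\alpha^{2k}$ and securing integrability. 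Care with conventions relating positive/negative frequency covectors to the sign of $\alpha$ is required, but the argument is standard and essentially that of Ref.~\cite{Fewster:1999gj}.
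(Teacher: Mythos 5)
Your proposal is correct and follows essentially the same route as the paper: a Fourier representation of the coincidence-limit integral, a split of $\nord{W}$ into its $W$ and $W_0$ contributions, positivity of the $W$ term, and rapid decay of both terms as $\alpha\to+\infty$ from the Hadamard wavefront-set condition (deferred, as in the paper, to Ref.~\cite{Fewster:1999gj}). Your explicit justification of the symmetry of $\nord{W}$ via cancellation of the commutator function, and your explicit positive-type argument for $((Q\otimes Q)W)(\bar h,h)\ge 0$, merely fill in details that the paper's proof assumes or cites.
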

\begin{proof}
	This is a slight generalisation of an argument first given in \cite{Fewster:1999gj}. First note that one has, for any smooth symmetric function $S$ and test function $f$ as above,
	\begin{align}
	\int d\tau  \coin{(\partial^k\otimes\partial^k)\left((f\otimes f) S\right)}(\tau)& =
	\int d\tau\,d\tau'  \delta(\tau-\tau') \partial_\tau^k\partial_{\tau'}^k\left((f\otimes f) S\right)(\tau,\tau')\nonumber \\
	&= \int_{-\infty}^\infty \frac{d\alpha}{2\pi}
	\int d\tau\,d\tau'  e^{-i\alpha(\tau-\tau')}
	\partial_\tau^k\partial_{\tau'}^k\left((f\otimes f) S\right)(\tau,\tau')\nonumber \\
	&= \int_{0}^\infty \frac{d\alpha}{\pi}
	\int d\tau\,d\tau' \, \alpha^{2k}e^{-i\alpha(\tau-\tau')}f(\tau)f(\tau')\,
	S(\tau,\tau')\,,
	\end{align}
	where in the second step we have inserted the Fourier representation of the $\delta$-function and in the last step used symmetry of $S$ and also integrated by parts $k$ times in both $\tau$ and $\tau'$. Applying this to $S=\phi^*((Q\otimes Q) \nord{W}$, 
	\begin{align}\label{eq:calc}
	\text{L.H.S. of~\eqref{eq:lem2}} &= \int_0^\infty \frac{d\alpha}{\pi}\alpha^{2k} \left(\phi^*((Q\otimes Q) \nord{W}) \right)(\bar{f_\alpha}, f_\alpha) \nonumber\\ 
	&= \int_0^\infty \frac{d\alpha}{\pi}\alpha^{2k} \left(\left(\phi^*((Q\otimes Q)  W)\right)(\bar{f_\alpha},f_\alpha)- \left(\phi^*((Q\otimes Q)  W_0)(\bar{f_\alpha},f_\alpha) \right)\right)\,,
	\end{align}
	noting that expressions of the form  $\phi^*((Q\otimes Q) W)$ are shown to exist in \cite{Fewster:1999gj}, with wave-front sets obeying
	\begin{equation}
	\WF(\phi^*((Q\otimes Q) W))\subset (\mathbb{R}\times\mathbb{R}^+)\times (\mathbb{R}\times\mathbb{R}^-)\subset T^*\mathbb{R}\times T^*\mathbb{R}.
	\end{equation}
	Together with other results proved in Ref.~\cite{Fewster:1999gj}, this shows that the two terms in the integrand in~\eqref{eq:calc} are non-negative and decay rapidly as $\alpha \to +\infty$ for any Hadamard state $\omega$ (see Theorem 2.2 of Ref.~\cite{Fewster:1999gj}). Here the microlocal properties of Hadamard states play a crucial role. Consequently, the final expression in~\eqref{eq:calc} may be written as the difference of two separately convergent nonnegative integrals. Discarding the first of these, the inequality~\eqref{eq:lem2} is proved.
\end{proof}
Applying this result to all terms arising from $\hat{\rho}_1$ and the first term in~\eqref{eq:rho2expr}, and combining with the other terms from~\eqref{eq:rho2expr} and~\eqref{eqn:rhoquantum}, we have proved the following result.

\begin{theorem} 
\label{the:linegen}
Let $W_0$ be the two-point function of a reference Hadamard state for the non-minimally coupled scalar field with coupling constant  $\xi \in \left[0,2\xi_c\right]$ and mass $m\ge 0$ defined on a globally hyperbolic spacetime $M$ with smooth metric $g$. Let $\gamma$ be a smooth timelike curve parametrised in proper time $\tau$, with velocity $U^\mu$ and acceleration $A^\mu=\nabla_U U^\mu$. Then, for all  Hadamard states $\omega$ and real-valued test functions $f \in \mathcal{D} (\mathbb{R}, \mathbb{R})$, the normal-ordered effective energy density obeys the QEI
\be \label{eqn:qboundline}
\langle \nord{\rho_U} \circ \gamma \rangle_\omega (f^2) \geq - \left( \mathfrak{Q}_1[f]  +\langle \nord{\Phi^2} \circ \gamma \rangle_\omega (\mathfrak{Q}_2[f]+\mathfrak{Q}_3[f])-\xi\langle \nabla_A \nord{\Phi^2} \circ \gamma \rangle_\omega (f^2) \right) \,,
\ee
where 
\be
\mathfrak{Q}_1[f]=\int_0^\infty \frac{d\alpha}{\pi} \left( \phi^*(\hat{\rho}_1 \, W_0)(\bar{f_\alpha},f_\alpha)+2\xi \alpha^2  \phi^* W_0  (\bar{f_\alpha},f_\alpha) \right) \,,
\ee
\be
\mathfrak{Q}_2[f](\tau)=\frac{1-2\xi}{n-2}m^2 f^2(\tau) +2\xi  ( f'(\tau))^2 \,,
\ee
and
\be
\mathfrak{Q}_3 [f](\tau)=\xi  \mathcal{R}_\xi(\gamma(\tau)) f(\tau)^2\,.
\ee
\end{theorem}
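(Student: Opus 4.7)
The plan is to integrate the pointwise identity~\eqref{eqn:rhoquantum} against $f(\tau)^2\,d\tau$ along $\gamma$ and estimate the three resulting contributions, using the tools developed in the preceding lemmas. The $\hat\rho_1$ piece together with one piece of $\hat\rho_2$ (obtained after a preliminary manipulation) will supply the state-independent lower bound $\mathfrak{Q}_1$; the remaining pieces will provide the state-dependent contributions $\mathfrak{Q}_2$, $\mathfrak{Q}_3$ and the $\nabla_A\nord{\Phi^2}$ term.

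For the $\hat\rho_1$ contribution the restriction $\xi\in[0,2\xi_c]$ is the key hypothesis: it ensures that both coefficients in~\eqref{eqn:lineoper}, namely $1-2\xi(n-1)/(n-2)$ and $2\xi/(n-2)$, are non-negative. Hence $\hat\rho_1$ may be written as a finite sum of operators of the form $Q\otimes Q$ with $Q$ a real-coefficient first-order differential operator (one term along $\nabla_U$ and one along each $\nabla_{e_a}$, each rescaled by the square root of the appropriate coefficient). Lemma~\ref{lem:Qline} with $k=0$ then applies termwise, and summing gives $\int d\tau\,f(\tau)^2\coin{\hat\rho_1\nord W}(\gamma(\tau))\ge -\int_0^\infty(d\alpha/\pi)\,\phi^*(\hat\rho_1 W_0)(\bar f_\alpha,f_\alpha)$, which is the first half of $\mathfrak{Q}_1[f]$.

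For the $\hat\rho_2$ contribution I would apply the preceding Lemma with $F=\nord W$ to obtain the identity~\eqref{eq:rho2expr}, in which $\int d\tau\,f^2\coin{\hat\rho_2\nord W}$ is rewritten as the sum of a $\coin{(\partial\otimes\partial)\cdot}$ integral with positive coefficient $2\xi$ and two explicit state-dependent terms involving $(f'(\tau))^2\langle\nord{\Phi^2}\rangle_\omega$ and $f(\tau)^2\nabla_A\langle\nord{\Phi^2}\rangle_\omega$. The first of these is bounded below by Lemma~\ref{lem:Qline} with $k=1$ and $Q=\mathbb{1}$, producing the $2\xi\alpha^2$ contribution to $\mathfrak{Q}_1[f]$. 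The two state-dependent terms must be carried through unchanged. Finally, the mass/curvature factor $(\xi\mathcal{R}_\xi-(1-2\xi)m^2/(n-2))\coin{\nord W}$ in~\eqref{eqn:rhoquantum} integrates directly into a further state-dependent term proportional to $\langle\nord{\Phi^2}\rangle_\omega$. Grouping the terms proportional to $\langle\nord{\Phi^2}\rangle_\omega$ according to whether they multiply $f^2$ or $(f')^2$ reproduces $\mathfrak{Q}_2+\mathfrak{Q}_3$, with the $\nabla_A$ piece appearing separately as stated.

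The substance of the argument already lives in Lemma~\ref{lem:Qline} and the preceding Lemma; what remains is essentially bookkeeping of signs and symmetry factors. The part I expect to be most delicate is tracking signs through the manipulation of $\hat\rho_2$ (where the preceding Lemma supplies a $-\tfrac{1}{2}\nabla_A\coin{F}$, the overall $-2\xi$ from $\hat\rho_2$ must be threaded through, and the Synge-rule identity relating $\nabla_U^2$ to $U^\mu U^\nu\nabla_\mu\nabla_\nu$ contributes a further factor), and verifying that the hypothesis $\xi\in[0,2\xi_c]$ is exactly what is needed for the $\hat\rho_1$ step---outside this interval, one coefficient in~\eqref{eqn:lineoper} changes sign and the $Q\otimes Q$ structure underpinning Lemma~\ref{lem:Qline} is lost. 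The $\mathcal{R}_\xi$ term, having indefinite sign, cannot be absorbed into $\mathfrak{Q}_1$ and must remain inside $\mathfrak{Q}_3$; this is precisely the obstruction that forces the QSEI to be state-dependent.
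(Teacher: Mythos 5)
Your proposal is correct and follows essentially the same route as the paper: integrate \eqref{eqn:rhoquantum} against $f^2$, bound the $\hat\rho_1$ terms via Lemma~\ref{lem:Qline} with $k=0$ using the $Q\otimes Q$ decomposition guaranteed by $\xi\in[0,2\xi_c]$, rewrite the $\hat\rho_2$ contribution with the first Lemma to obtain \eqref{eq:rho2expr}, bound its leading term with Lemma~\ref{lem:Qline} (indeed with $k=1$, $Q=\mathbb{1}$, which is what produces the $2\xi\alpha^2$ term --- your parameter choice here is the right one), and carry the remaining state-dependent terms through unchanged. Nothing essential is missing.
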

An important feature of the QEI~\eqref{eqn:qboundline}, and indeed all the QEIs that we will derive in this paper, is that the lower bound depends on the state of interest $\omega$; that is, it is a \emph{state-dependent QEI}, unlike e.g., the quantum weak energy inequality proved for the minimally coupled scalar field in~\cite{Fewster:1999gj}. Now in fact no state-independent QSEI could possibly hold (except in the massless minimally coupled case) because the classical model can violate the SEC. This makes it possible to construct single-particle quantum states relative to the Minkowski vacuum state whose averaged EED is negative for some test function. Tensoring together $N$ copies of such a state, the averaged EED scales with $N$ and so it is clear that no state-independent QSEI can be valid, even in Minkowski space. See Ref.~\cite{Fewster:2007ec}, where an analogous argument is given in detail for the energy density of the nonminimally coupled field. Nonetheless, the state-dependence of the lower bound raises concerns that will be discussed more fully in Sec.~\ref{sec:KMS}. For now we note that the only nontrivial quantum field appearing in the bound is the Wick square $\nord{\Phi^2}$ (and at most one derivative thereof), while the EED itself involves contributions involving two derivatives of $\Phi$ and squares of the derivatives of $\Phi$. This distinction will enable us to show that the QEIs we study in this paper are nontrivial. 
 
The expression for the QSEI bound simplifies in various situations: if $\gamma$ is geodesic the last term in Eq.~\eqref{eqn:qboundline} vanishes; for flat spacetimes $\mathfrak{Q}_3$ vanishes, while for minimal coupling (and any spacetime curvature) we have
\be
\label{eqn:clinemin}
\langle \nord{\rho_U} \circ \gamma \rangle_\omega (f^2) \geq - \left[ \int_0^\infty \frac{d\alpha}{\pi} \phi^*((\nabla_{U} \otimes \nabla_{U}) \, W_0)(\bar{f_\alpha},f_\alpha)+\frac{m^2}{n-2}  \langle \nord{\Phi^2}  \circ \gamma \rangle_\omega(f^2) \right] 
\,.
\ee   

There is another interesting situation in which a variant of the above result can be obtained. Suppose that the background spacetime is such that $\mathcal{R}_\xi$
is nonnegative. In particular, this occurs if the background solves the Einstein equations with matter that obeys both the strong and weak energy conditions, for then we have $R_{\mu \nu} U^\mu U^\nu\le 0$ and $R\ge 2R_{\mu\nu} U^\mu U^\nu$, whereupon
\begin{equation}
\frac{2\xi}{n-2} R -R_{\mu \nu} U^\mu U^\nu \ge -\left(1-\frac{4\xi}{n-2}\right) R_{\mu \nu} U^\mu U^\nu \ge -\left(1-\frac{2}{n-1}\right) R_{\mu \nu} U^\mu U^\nu \ge 0
\end{equation}
if $\xi$ obeys the standing assumption $\xi\in  \left[0,2\xi_c\right]$ and the spacetime dimension $n\ge 3$. Making the further mild technical assumption that $\mathcal{R}_\xi$ has a smooth nonnegative square root,\footnote{Not all smooth nonnegative functions have smooth square roots; see~\cite{Glaeser:1963,BonyColombiniPernazza:2010}.} the corresponding contributions to the averaged EED in \eqref{eqn:rhoquantum} are of the same form as those in $\hat{\rho}_1$ and can be treated in the same way. In this situation, the QEI becomes
\be \label{eqn:qboundline_mod}
\langle \nord{\rho_U}  \circ \gamma \rangle_\omega (f^2) \geq - \left( \mathfrak{Q}_1[f] + \mathfrak{Q}_4[f] +( \nord{\Phi^2} \circ \gamma )_\omega (\mathfrak{Q}_2[f]) -\xi\langle \nabla_A \nord{\Phi^2} \circ \gamma \rangle_\omega (f^2)\right) \,,
\ee
where
\be
\mathfrak{Q}_4[f]=\xi \int_0^\infty \frac{d\alpha}{\pi}  \left(\phi^*( (\sqrt{\mathcal{R}_\xi}\otimes\sqrt{\mathcal{R}_\xi}) \, W_0)\right)(\bar{f_\alpha},f_\alpha)  \,.
\ee
Although this bound is still state-dependent, the coefficients appearing in the state-dependent parts no longer depend explicitly on the background geometry. If the background spacetime solves Einstein equations with matter satisfying the strong energy condition (but not necessarily the weak energy condition) then a similar procedure could be used to absorb the $-R_{\mu\nu}U^\mu U^\nu$ term leaving the Ricci scalar in the state-dependent part.
 
\subsection{Worldvolume}
\label{sub:qvolume}

In this subsection we will consider averages of the EED over a spacetime volume. We will require some more terminology and notation for these purposes. 
First, following Ref.~\cite{Fewster:2007rh} we define a small sampling domain to be an
open subset\footnote{Ref.~\cite{Fewster:2007rh} allows for $\Sigma$ to be a timelike submanifold of dimension lower than $n$, but we will not need that level of generality here.} $\Sigma$ of $(M,g)$ that (i) is contained in a globally hyperbolic convex normal neighbourhood of $M$, and (ii) may be covered by a single \emph{hyperbolic coordinate chart} $\{ x^a \}$, which, by definition, requires that $\partial/ \partial x^0$ is future pointing and timelike and that there exists a constant $c>0$ such that 
\be \label{eqn:speed}
c|u_0| \geq \sqrt{\sum_{j=1}^{n-1} u_j^2} 
\ee
holds for the components of every causal covector $u$ at each point of $\Sigma$ -- in other words, the coordinate speed of light is bounded. Now we may express the hyperbolic chart $\{ x^\mu \}$ by a map $\kappa$ where $\Sigma \to \mathbb{R}^n$, $\kappa (p)= (x^0(p),\ldots,x^{n-1}(p))$. Any function $f$ on $\Sigma$ determines a function $f_\kappa = f\circ \kappa^{-1}$ on $\Sigma_\kappa=\kappa(\Sigma)$. In particular, the inclusion map $\iota:\Sigma\to M$ induces a smooth map $\iota_\kappa: \Sigma_\kappa \to M$. Then the bundle $\mathcal{N}^+$ of non-zero future pointing null covectors on $(M,g)$ pulls back under $\iota_\kappa$ so that
\be
\iota^*_\kappa \mathcal{N}^+ \subset \Sigma_\kappa \times \Gamma \,,
\ee
where $\Gamma \subset \mathbb{R}^n$ is the set of all $u_a$ with $u_0 >0$ and satisfying Eq.~(\ref{eqn:speed}) so it is a proper subset of the upper half space $\mathbb{R}^+ \times \mathbb{R}^{n-1}$. For brevity, if $S$ is a smooth function on $M\times M$, we write $S_\kappa$ instead of $S_{\kappa\times\kappa}$ for $S\circ(\kappa^{-1}\times\kappa^{-1})$.

\subsubsection{Bound with explicit mass-dependence}

Let $f$ be any real-valued test function  compactly supported in the small sampling domain $\Sigma$ and let $U^\mu$ be a future-directed timelike unit vector field defined on a neighbourhood of the support of $f$. Applying the
Gram-Schmidt process to the basis $U,\partial/\partial x^1,\ldots,\partial/\partial x^{n-1}$, we obtain a smooth $n$-bein $\{e_a^\mu\}_{a=0,1 \dots n-1}$ on this neighbourhood, with $e_0^\mu=U^\mu$. 

Following a procedure similar to the one used to derive the worldline inequality, we fix a Hadamard reference state $\omega_0$ with $2$-point function $W_0$. Then the expectation values of the effective energy density in Hadamard state $\omega$ and normal-ordered relative to $\omega_0$, can be written using Eq.~(\ref{eqn:EED1b}) as
\begin{equation}\label{eqn:rhoquantumworld}
\langle \nord{\rho_U} (f^2) \rangle_\omega =  \int \, \dV f^2 \coin{ \hat{\rho}^{\RN{1}}\nord{W} } 
-\left \langle\nord{\Phi^2}\left(\mathfrak{Q}^{\RN{1}}_2[f]\right)\right\rangle_\omega\,,
\end{equation}
where the operator $\hat{\rho}^{\RN{1}}$ (the superscript $\RN{1}$ merely serves to distinguish this bound from a bound without explicit mass dependence that will be described shortly) is given by 
\be
\hat{\rho}^{\RN{1}}=\left(1-\frac{2\xi}{n-2}\right)(U^\mu \nabla_\mu \otimes U^\nu \nabla_\nu)+\frac{2\xi}{n-2} \sum_{a=1}^{n-1} \left( e^\mu_a \nabla_\mu \otimes e^\nu_a \nabla_\nu\right) 
\ee
and
\begin{equation}\label{eq:QAU}
\mathfrak{Q}^{\RN{1}}_2[f] = \xi \nabla_\mu\nabla_\nu ( f^2 U^\mu U^\nu)+ \frac{1-2\xi}{n-2} m^2 f^2  -\xi\mathcal{R}_\xi f^2\,.
\end{equation}

All terms appearing in $\hat{\rho}^{\RN{1}}$ take the form $Q\otimes Q$, with $Q$ a partial differential operator with smooth real coefficients, provided that the coupling obeys $\xi \in [0,(n-2)/2]=[0,2(n-1)\xi_c]$. Their contributions to the averaged EED can then all be bounded from below using the following result, which is similar to Lemma \ref{lem:Qline} for the worldline. 
\begin{lemma}
	\label{lem:Qworl}
	Let $Q$ be a partial differential operator on $M$ with smooth real coefficients.
	Then the inequality
	\begin{equation}
		\label{eqn:lemworl}
		 \int \, \dV  f^2 \coin{(Q\otimes Q)\nord{W} }  \ge  - 2 \int_{\mathbb{R}^+ \times \mathbb{R}^{n-1}} \frac{d^n\alpha}{(2\pi)^n} \left((Q\otimes Q) W_0\right)_\kappa (\overline{h_\alpha}, h_\alpha) > -\infty\,,
	\end{equation}  
	holds for all Hadamard states $\omega$ and real-valued test functions $f$ supported in the small sampling domain $\Sigma$, where $\alpha=(\alpha_0,\ldots,\alpha_{n-1})$ and 
	\begin{equation} 
	h_\alpha(x)=e^{i\alpha_\mu x^\mu} (-g_\kappa(x))^{1/4} f_\kappa(x).
	\end{equation}  
\end{lemma}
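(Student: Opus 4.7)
The approach parallels that of Lemma~\ref{lem:Qline}, with the hyperbolic chart $\kappa$ playing the role of the worldline parametrisation. The plan is to rewrite the coincidence integral as a Fourier integral over $\alpha\in\mathbb{R}^n$, decompose $\nord{W} = W - W_0$, drop the non-negative $W$-contribution, and establish absolute convergence of the surviving $W_0$ piece via microlocal analysis.

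First I would pull the integrand back to $\Sigma_\kappa$, writing
\begin{equation}
\int\dV\,f^2\,\coin{(Q\otimes Q)\nord{W}} = \int d^n x\,\sqrt{-g_\kappa(x)}\,f_\kappa(x)^2\,((Q\otimes Q)\nord{W})_\kappa(x,x).
\end{equation}
Since $\nord{W}$ is smooth, the coincidence limit is well defined, and as the ensuing $\delta$-function forces $x=y$, the volume factor may be split symmetrically as $(-g_\kappa(x))^{1/4}(-g_\kappa(y))^{1/4}$. Inserting $\delta^{(n)}(x-y) = \int d^n\alpha\,(2\pi)^{-n}\,e^{-i\alpha\cdot(x-y)}$ then converts the expression into
\begin{equation}
\int_{\mathbb{R}^n}\frac{d^n\alpha}{(2\pi)^n}\,((Q\otimes Q)\nord{W})_\kappa(\overline{h_\alpha}, h_\alpha).
\end{equation}
Because the Wick-ordering prescription uses only the symmetrised part of the bidistribution (cf.~\eqref{eq:Trs_def}) and $h_{-\alpha}=\overline{h_\alpha}$, the integrand is invariant under $\alpha\to-\alpha$, so the domain may be restricted to $\mathbb{R}^+\times\mathbb{R}^{n-1}$ with a compensating factor of $2$, matching the prefactor in the claimed bound.

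Next I would split $\nord{W}=W-W_0$ and observe that the $W$-contribution is pointwise non-negative: since $W$ is of positive type and $Q$ has smooth real coefficients, integration by parts gives $((Q\otimes Q)W)_\kappa(\overline{h_\alpha}, h_\alpha)\geq 0$. Dropping this non-negative term produces the claimed lower bound in terms of $W_0$ alone.

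The main technical obstacle will be verifying that the $W$ and $W_0$ integrals separately converge absolutely—so that the split is legitimate—and, in particular, that the $W_0$ integral is finite. This is the worldvolume counterpart of the microlocal decay estimate invoked in Lemma~\ref{lem:Qline} and can be carried out by adapting the arguments of Refs.~\cite{Fewster:1999gj,Fewster:2007rh}. The Hadamard condition $\WF(W_0)\subset\mathcal{N}^+\times\mathcal{N}^-$, combined with the hyperbolic-chart property $\iota_\kappa^*\mathcal{N}^+\subset\Sigma_\kappa\times\Gamma$ where $\Gamma$ is a proper subset of $\mathbb{R}^+\times\mathbb{R}^{n-1}$ by~\eqref{eqn:speed}, implies that $((Q\otimes Q)W_0)_\kappa(\overline{h_\alpha}, h_\alpha)$ decays faster than any inverse power of $|\alpha|$ outside any conic neighbourhood of $\Gamma$. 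Inside $\Gamma$ the integrand is pointwise non-negative, and the rapid decay of the Fourier transform of the smooth, compactly supported function $f_\kappa$ controls the polynomial growth introduced by $Q$, yielding the required integrability.
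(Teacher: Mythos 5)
Your overall architecture coincides with the paper's: insert the Fourier representation of the $\delta$-function with the volume element split symmetrically into quartic roots, use the symmetry of $\nord{W}$ together with $h_{-\alpha}=\overline{h_\alpha}$ to fold the $\alpha$-integral onto $\mathbb{R}^+\times\mathbb{R}^{n-1}$ with a factor of $2$, split $\nord{W}=W-W_0$, discard the manifestly non-negative $W$-term, and justify the split by separate convergence of the two integrals via wave-front set estimates. Up to and including the positivity step, this is essentially the paper's proof.

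However, the final convergence argument --- which is the heart of the lemma --- is wrong as you state it. The pull-back satisfies $\WF(((Q\otimes Q)W_0)_\kappa)\subset(\Sigma_\kappa\times\Gamma)\times(\Sigma_\kappa\times(-\Gamma))$, and evaluating on the pair $(\overline{h_\alpha},h_\alpha)$ probes the Fourier transform at the covector pair $(-\alpha,\alpha)$ (in the sign convention of the wave-front set definition used here); this lies in the singular set only when $\alpha\in-\Gamma$, i.e.\ in the half-space $\alpha_0<0$. The whole point of restricting the integration to $\mathbb{R}^+\times\mathbb{R}^{n-1}$, and of demanding the hyperbolic-chart condition \eqref{eqn:speed}, is that the non-decay directions are then \emph{entirely disjoint} from the integration region, so the integrand is rapidly decaying throughout $\mathbb{R}^+\times\mathbb{R}^{n-1}$ and the integral converges (this is the estimate of \cite{Fewster:2007rh}, and it mirrors the worldline case where the integrand of Lemma~\ref{lem:Qline} decays as $\alpha\to+\infty$). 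You instead place the singular directions in $\Gamma\subset\{\alpha_0>0\}$, inside the integration region, and then claim that ``the rapid decay of the Fourier transform of $f_\kappa$ controls the polynomial growth introduced by $Q$.'' That patch cannot work: the test functions $h$ are already built into the bidistribution being Fourier-transformed, so there is no additional decaying factor available, and if $((Q\otimes Q)W_0)_\kappa(\overline{h_\alpha},h_\alpha)$ genuinely grew polynomially along a cone of directions inside $\mathbb{R}^+\times\mathbb{R}^{n-1}$, the integral $\int d^n\alpha$ over that cone would diverge and the lemma would be false. (One can check the sign in Minkowski space, where the integrand reduces to expressions involving $|\hat f(\alpha+k)|^2$ with $k$ on the positive mass shell: for $\alpha_0>0$ the argument $\alpha+k$ is driven to infinity and $\hat f$ decays rapidly, whereas for $\alpha_0<0$ one can sit at $\alpha=-k$ and there is no decay.) You should correct the direction of the wave-front set estimate and conclude rapid decay \emph{inside} the integration region, rather than outside $\Gamma$.
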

\begin{proof}
	The proof works in a similar way to that of Lemma \ref{lem:Qline}. First note that, for any smooth symmetric function $S$ and test function $f$ as above, and writing 
	$h(x)=  (-g_\kappa(x))^{1/4} f_\kappa(x)$, we have 
	\begin{align}
		\int \, \dV f^2 \coin{S}&= \int d^n x\,d^n x'   \delta^n(x-x')h(x)h(x')   S_\kappa(x,x') \nonumber \\
		&= \int_{\mathbb{R}\times \mathbb{R}^{n-1}} \frac{d^n \alpha}{(2\pi)^n}
		\int d^n x \,d^n x' e^{-i\alpha(x-x')}h(x)h(x')   S_\kappa(x,x') \nonumber \\
		&= 2 \int_{\mathbb{R}^+ \times \mathbb{R}^{n-1}} \frac{d^n \alpha}{(2\pi)^n}
		\int d^n x\,d^n x' \, e^{-i\alpha(x-x')}h(x)h(x')   S_\kappa(x,x') \nonumber\\
		&= 2 \int_{\mathbb{R}^+ \times \mathbb{R}^{n-1}} \frac{d^n \alpha}{(2\pi)^n} S_\kappa(\overline{h_\alpha}, h_\alpha)\,,
	\end{align}
	where we have inserted the Fourier representation of the $\delta$-function in the second step and in the last step we used the symmetry of $S$. All these manipulations are valid owing to the compact support of $f$ (and hence $h$) and the smoothness of $f$ and $S$. Applying this to $S=(Q\otimes Q) \nord{W}$, gives
	\begin{align}
		\text{L.H.S. of~\eqref{eqn:lemworl}} &= 2\int_{\mathbb{R}^+ \times \mathbb{R}^{n-1}} \frac{d^n \alpha}{(2\pi)^n} \left((Q\otimes Q) \nord{W} \right)_\kappa(\overline{h_\alpha}, h_\alpha) \nonumber\\ 
		&=2 \int_{\mathbb{R}^+ \times \mathbb{R}^{n-1}} \frac{d^n \alpha}{(2\pi)^n} \left\{\left((Q\otimes Q)  W \right)_\kappa (\overline{h_\alpha}, h_\alpha)- \left((Q\otimes Q)  W_0\right)_\kappa (\overline{h_\alpha}, h_\alpha) \right\}\,.
	\end{align} 
	In the last line, the two terms in the integrand involve pull-backs of distributions via $\kappa\times\kappa$. The criteria for existence are trivially satisfied in this case, and their wave-front sets are bounded by
	\begin{equation}
	\WF((Q\otimes Q)  W )_\kappa) \subset (\iota^*_\kappa\times \iota^*_\kappa) (\mathcal{N}^+\times  \mathcal{N}^-) \subset (\Sigma_\kappa \times \Gamma) \times (\Sigma_\kappa \times -\Gamma)
	\end{equation}
	and the same bound for $W_0$. As $\Gamma$ is contained in the $\alpha_0>0$ half-space of $\mathbb{R}^n$, the two terms in the integrand are separately rapidly decaying as $\alpha\to\infty$ in the integration region (see~\cite[p.~444]{Fewster:2007rh}) Therefore the integrals exist separately; as $((Q\otimes Q) W)$ is non-negative, we may discard this term, whereupon the inequality~\eqref{eqn:lemworl} is proved.
\end{proof}
Applying this lemma to Eq.~\eqref{eqn:rhoquantumworld}, we have proved the following result for $\xi \in [0,(n-2)/2]$, a range that includes the conformal coupling $\xi_c$:
\begin{theorem}\label{thm:qvol1}
For coupling constant $\xi \in [0,2(n-1)\xi_c]$, suppose that $f$ is a real-valued test function compactly supported in a small sampling domain and let $U^\mu$ be a future-pointing unit timelike vector field defined on a neighbourhood of the support of $f$. Then the QEI
\begin{align} \label{eqn:qboundone}
\langle\nord{\rho_U} (f^2) \rangle_\omega &\geq - \left(\mathfrak{Q}_1^{\RN{1}}[f]
+\left \langle\nord{\Phi^2}\left(\mathfrak{Q}^{\RN{1}}_2[f]\right)\right\rangle_\omega  \right)
  \,, 
\end{align}
where 
\be
\mathfrak{Q}_1^{\RN{1}}[f]= 2\int_{\mathbb{R}^+ \times \mathbb{R}^{n-1}} \frac{d^n \alpha}{(2\pi)^n}  (\hat{\rho}^{\RN{1}} \, W_0) (\overline{h_\alpha},h_\alpha) \,,
\ee
holds for all Hadamard states $\omega$, where $W_0$ is the $2$-point function of the Hadamard reference state used to define the normal-ordering prescription, and $\hat{\rho}^{\RN{1}}$, $h_\alpha$, and $\mathfrak{Q}^{\RN{1}}_2[f]$ are as defined above.
\end{theorem}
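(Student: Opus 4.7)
The plan is to extract the result directly from the decomposition~\eqref{eqn:rhoquantumworld} by bounding its first term with Lemma~\ref{lem:Qworl}. That equation already writes $\langle\nord{\rho_U}(f^2)\rangle_\omega$ as the sum of $\int\dV\, f^2 \coin{\hat{\rho}^{\RN{1}}\nord{W}}$ and a contribution of exactly the form $-\langle\nord{\Phi^2}(\mathfrak{Q}^{\RN{1}}_2[f])\rangle_\omega$ demanded by the theorem, so the only task is to produce a state-independent lower bound (modulo the fixed reference state $W_0$) on the first term.

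The key observation is that, under the hypothesis $\xi\in[0,2(n-1)\xi_c]=[0,(n-2)/2]$, the operator $\hat{\rho}^{\RN{1}}$ admits a decomposition as a finite sum of operators of the form $Q\otimes Q$ with $Q$ a partial differential operator having smooth real coefficients. Indeed, the two coefficients appearing in $\hat{\rho}^{\RN{1}}$, namely $1-2\xi/(n-2)$ and $2\xi/(n-2)$, are non-negative precisely in this range, so one may set
\begin{equation*}
Q_0 = \sqrt{1-\tfrac{2\xi}{n-2}}\,U^\mu\nabla_\mu,\qquad Q_a = \sqrt{\tfrac{2\xi}{n-2}}\,e_a^\mu\nabla_\mu \quad(a=1,\ldots,n-1),
\end{equation*}
so that $\hat{\rho}^{\RN{1}}=\sum_{a=0}^{n-1}Q_a\otimes Q_a$.

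I would then apply Lemma~\ref{lem:Qworl} to each summand separately and sum the resulting inequalities. Because the map $Q\mapsto (Q\otimes Q) W_0$ is bilinear, the sum of the right-hand sides collapses to
\begin{equation*}
-2\int_{\mathbb{R}^+\times\mathbb{R}^{n-1}} \frac{d^n\alpha}{(2\pi)^n}\,\bigl(\hat{\rho}^{\RN{1}} W_0\bigr)(\overline{h_\alpha},h_\alpha) = -\mathfrak{Q}_1^{\RN{1}}[f],
\end{equation*}
which is finite by the same lemma. Substituting this bound into~\eqref{eqn:rhoquantumworld} yields the QEI~\eqref{eqn:qboundone}.

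There is no serious obstacle here: all of the microlocal and Fourier-analytic content -- including the essential use of the Hadamard condition on $\omega$ through the rapid-decay and positive-type properties of $W$ and $W_0$ -- has already been packaged into Lemma~\ref{lem:Qworl}. The only conceptual point worth emphasising is that the allowed $\xi$-range here, $[0,(n-2)/2]$, is strictly wider than the worldline range $[0,2\xi_c]$; this reflects the fact that in the worldvolume case the $(\Phi\nabla_\mu\nabla_\nu\Phi)$ contribution is transferred onto the test function via Leibniz (see Eq.~\eqref{eqn:EED1b}), so the $\nabla_U\otimes\nabla_U$ coefficient does not pick up the extra $-2\xi$ shift produced in the worldline derivation after the integration-by-parts lemma.
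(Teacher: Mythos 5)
Your argument is exactly the paper's: starting from the decomposition~\eqref{eqn:rhoquantumworld}, you write $\hat{\rho}^{\RN{1}}$ as a sum of $Q\otimes Q$ operators (possible precisely for $\xi\in[0,(n-2)/2]$ since both coefficients are then non-negative), apply Lemma~\ref{lem:Qworl} term by term, and sum. This matches the paper's proof, and your closing remark correctly identifies why the worldvolume $\xi$-range is wider than the worldline one.
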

As in Sec.~\ref{sub:qline} there are situations in which some of the state-dependent terms in this bound may be bounded independently of the state. In particular, the mass term can be treated in this way if $\xi\in [1/2,2(n-1)\xi_c]$, and the $\mathcal{R}_\xi$ term can if the background obeys
SEC and WEC and $\xi\in [0,(n-1)\xi_c]$ (and we assume that $\sqrt{\mathcal{R}_\xi}$ is smooth).
There are also obvious simplifications for Ricci flat spacetimes and at minimal coupling $\xi=0$.
We leave the details to the reader.

\subsubsection{Bound without explicit mass dependence}

The bound derived depends on the mass of the field but, as in the classical case~\cite{Brown:2018hym}, we can derive a second bound that does not have explicit mass dependence by using the field equation.

Instead of starting with~\eqref{eqn:EED1}, we use Eq.~(\ref{eqn:EED2}), with identities~\eqref{eq:Wick_id1} and \eqref{eq:Wick_id2}, which gives
\begin{equation}\label{eqn:rhoquantumworld2}
	 \langle \nord{\rho_{U}} (f^2) \rangle_\omega =  \int \, \dV \coin{ \hat{\rho}^{\RN{2}} \nord{W} }  -
	 \left \langle\nord{\Phi^2}\left(\mathfrak{Q}^{\RN{2}}_2[f]\right)\right\rangle_\omega 
	 \,,
\end{equation} 
where
\be
\hat{\rho}^{\RN{2}}=\frac{n-3}{n-2}(U^\mu \nabla_\mu \otimes U^\nu \nabla_\nu)+\frac{1}{n-2} \sum_{a=1}^{n-1} ( e^\mu_a \nabla_\mu \otimes e^\nu_a \nabla_\nu) 
\ee
and 
\begin{equation}\label{eq:QBU}
\mathfrak{Q}^{\RN{2}}_2[f] = \xi \nabla_\nu\nabla_\mu (f^2 U^\mu U^\nu )
-\frac{1-2\xi}{2(n-2)} \Box_g f^2  -\xi \mathcal{R}_{1/2} f^2 \,.
\end{equation}

Applying Lemma~\ref{lem:Qworl} to the terms arising from $\hat{\rho}^{\RN{2}}$, we have proved:
\begin{theorem}\label{thm:qvol2}
	For coupling constant $\xi \in \mathbb{R}$, suppose that $f$ is a real-valued test function compactly supported in a small sampling domain and let $U^\mu$ be a future-pointing unit timelike vector field defined on a neighbourhood of the support of $f$. Then the QEI 
\begin{equation} \label{eqn:qboundtwo}
\langle \nord{\rho_U} (f^2) \rangle_\omega  \geq -  \left(\mathfrak{Q}^{\RN{2}}_1[f]
+\left \langle\nord{\Phi^2}\left(\mathfrak{Q}^{\RN{2}}_2[f]\right)\right\rangle_\omega\right) \,,
\end{equation}  
holds for all Hadamard states $\omega$, where
\be
\mathfrak{Q}^{\RN{2}}_1[f]=2 \int_{\mathbb{R}^+ \times \mathbb{R}^{n-1}} \frac{d^n \alpha}{(2\pi)^n}   (\hat{\rho}^{\RN{2}} \, W_0) (\overline{h_\alpha},h_\alpha)  \,,
\ee 
$W_0$ is the $2$-point function of the Hadamard reference state used to define the normal-ordering prescription,
and $\hat{\rho}^{\RN{2}}$, $h_\alpha$, and $\mathfrak{Q}^{\RN{2}}_2[f]$ are as defined above. 
\end{theorem}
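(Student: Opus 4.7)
The plan is to proceed in direct parallel with the proof of Theorem \ref{thm:qvol1}, but starting from the alternative decomposition of $\rho_U$ in Eq.~\eqref{eqn:EED2} rather than Eq.~\eqref{eqn:EED1b}. This is the only substantive change, and it has the effect of moving all the $\xi$-dependence out of the derivative operator into the $\Phi^2$ term, which explains why the bound now holds for all $\xi \in \mathbb{R}$ rather than in the restricted range required by Theorem \ref{thm:qvol1}.

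First, I would normal-order Eq.~\eqref{eqn:EED2} relative to $\omega_0$ (with $f^2$ in place of $f$) and take the expectation value in $\omega$. The $(\Phi P_\xi\Phi)$ term drops out because the normal-ordered field equation holds after taking expectation values, as was established earlier in the quantization section (the difference of Hadamard two-point functions is a smooth bisolution of $P_\xi$). Using Eq.~\eqref{eq:Wick_id1} together with the orthonormal decomposition $g^{\mu\nu} = U^\mu U^\nu - \sum_{a=1}^{n-1} e_a^\mu e_a^\nu$, one sees that
\begin{equation}
U^\mu U^\nu - \frac{g^{\mu\nu}}{n-2} = \frac{n-3}{n-2} U^\mu U^\nu + \frac{1}{n-2}\sum_{a=1}^{n-1} e_a^\mu e_a^\nu,
\end{equation}
so the derivative contribution reduces to $\int \dV\, \coin{\hat{\rho}^{\RN{2}}\nord{W}}$ (absorbing $f^2$ into the integration). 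The remaining $\Phi^2$ terms in Eq.~\eqref{eqn:EED2} may be reorganised using the distributional Leibniz rule~\eqref{eqn:Leibdist} to yield exactly $-\langle\nord{\Phi^2}(\mathfrak{Q}^{\RN{2}}_2[f])\rangle_\omega$, establishing Eq.~\eqref{eqn:rhoquantumworld2}.

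Second, the key structural observation is that $\hat{\rho}^{\RN{2}}$ is a linear combination of operators of the form $Q\otimes Q$ with real coefficients $(n-3)/(n-2)$ and $1/(n-2)$, both of which are nonnegative for $n\ge 3$. Each $Q\otimes Q$ term can therefore be estimated by applying Lemma~\ref{lem:Qworl} directly; summing the resulting bounds (with the nonnegative coefficients absorbed into the respective $Q$'s) produces the Fourier-type integral $\mathfrak{Q}^{\RN{2}}_1[f]$ as displayed in the theorem. Combining this with the $\nord{\Phi^2}$ term extracted above yields the claimed inequality.

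There is no real obstacle here: the analytical heavy lifting has already been done in Lemma~\ref{lem:Qworl} (which bounds the derivative piece by exploiting the wave-front set of Hadamard two-point functions and positivity of $(Q\otimes Q)W$), and the only things to verify are the algebraic identity that identifies $\hat{\rho}^{\RN{2}}$ as a sum of $Q\otimes Q$ pieces with nonnegative coefficients, and the Leibniz-type manipulation that produces $\mathfrak{Q}^{\RN{2}}_2[f]$. The mild point worth flagging is the unrestricted range of $\xi$: because the coefficients in $\hat{\rho}^{\RN{2}}$ are purely geometric, the sign constraint on $\xi$ that was needed in Theorem~\ref{thm:qvol1} is no longer required, at the price of having the mass parameter reappear implicitly through $\nord{\Phi^2}$ evaluated against a more complicated test distribution.
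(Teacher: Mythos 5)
Your proposal is correct and follows essentially the same route as the paper: starting from the decomposition~\eqref{eqn:EED2}, using the normal-ordered field equation and the frame identity $g^{\mu\nu}=U^\mu U^\nu-\sum_{a=1}^{n-1}e_a^\mu e_a^\nu$ to arrive at Eq.~\eqref{eqn:rhoquantumworld2}, and then applying Lemma~\ref{lem:Qworl} to each $Q\otimes Q$ term of $\hat{\rho}^{\RN{2}}$, whose coefficients $\tfrac{n-3}{n-2}$ and $\tfrac{1}{n-2}$ are nonnegative for $n\ge 3$ independently of $\xi$. Your closing observations about the unrestricted range of $\xi$ and the only implicit mass dependence are also consistent with the paper's discussion.
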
 
As with Theorem~\ref{thm:qvol1} there are situations in which some of the state-dependent terms may be given state-independent lower bounds, and also simplifications at minimal coupling and in Ricci-flat spacetimes. Details are left to the reader. 
For $\xi\in [0,2(n-1)\xi_c]$ Theorems~\ref{thm:qvol1} and~\ref{thm:qvol2} may be combined into a single theorem by taking the stricter of the two bounds in each case. This would be similar to the classical Theorem 2 of Ref.~\cite{Brown:2018hym}.

It is interesting to note that we can write the first term on the R.H.S. of Eqs.~\eqref{eq:QAU} and \eqref{eq:QBU} using
\begin{equation}
\nabla_\nu\nabla_\mu (f^2 U^\mu U^\nu) = 
f^2\left(\nabla_{\mu} A^\mu + \theta^2 +\nabla_U\theta\right) + \nabla_U^2 f^2 + 2\theta\nabla_Uf^2+ \nabla_A f^2  
\end{equation}
where $\theta=\nabla_{\mu}U^\mu$ is the expansion and $A^\mu$ is the acceleration. If $U^\mu$ is an irrotational timelike geodesic congruence then the Raychaudhuri equation
	\be
\label{eqn:ray0}
\nabla_U\theta =R_{\mu \nu} U^\mu U^\nu-2\sigma^2 -\frac{\theta^2}{n-1}  
\ee
gives  
\begin{equation}
\nabla_\nu\nabla_\mu (f^2 U^\mu U^\nu) = 
f^2\left( R_{\mu\nu} U^\mu U^\nu +\frac{n-2}{n-1}\theta^2 - 2\sigma^2 \right) + \nabla_U^2 f^2 + 2\theta\nabla_Uf^2  \,,
\end{equation}
where $\sigma$ is the shear scalar. If the background obeys the SEC then $-R_{\mu\nu} U^\mu U^\nu$ is nonnegative and, together with the shear term $-2\sigma^2$, can be absorbed into the state-independent part of the bound (assuming they have smooth square roots).

\section{Minkowski space}

To illustrate our results let us consider the non-minimally coupled scalar field in the $n$-dimensional Minkowski space $M_{\text{mink}} = (\mathbb{R}^n, \eta)$, with $\eta=\text{diag}(+1,-1,\ldots,-1)$ being the standard Minkowski metric.  

\subsection{Worldline}
\label{sec:mink-space}

First we will apply the worldline bound of Theorem \ref{the:linegen} to $n$-dimensional Minkowski space for the case of an inertial curve, i.e., $\gamma$ is a timelike geodesic. Without loss of generality, inertial coordinates may be chosen so that $\gamma(t)=(t,\bm{0})$. Also we choose our reference state to be the vacuum state $\Omega$, which has the two-point function
\be
\label{eqn:vac}
W_\Omega(t,\bm{x},t',\bm{x}')
=\int d\mu (\bm{k}) e^{-ik_\mu (x-x')^\mu}
=\int d\mu (\bm{k}) e^{-i[(t-t')\omega(\bm{k})+(\bm{x}-\bm{x}')\cdot \bm{k}]} \,,
\ee
where 
\be
\label{eqn:measure}
d\mu(\bm{k})=\int \frac{d^{n-1}\bm{k} }{(2\pi)^{n-1}}\frac{1}{2\omega(\bm{k})} \,,
\ee
is the measure and $k_\mu=(\omega(\bm{k}),\bm{k})$ with $\omega(\bm{k})=\sqrt{\bm{k}^2+m^2}$.
The operator $\hat{\rho}_1$ from Eq.~(\ref{eqn:lineoper}) can be written as
\be
\label{eqn:rho1mink}
\hat{\rho}_1=\left(1-2\xi\frac{n-1}{n-2}\right)(\partial_0 \otimes \partial_{0}) +\frac{2\xi}{n-2} \sum_{i=1}^{n-1} (\partial_i \otimes \partial_{i}) \,.
\ee
If we use the identity
\be
m^2 \phi^* W_\Omega (\bar{g} \otimes g)+\sum_{i=1}^{n-1} \phi^* (\partial_i \otimes \partial_i) W_\Omega (\bar{g} \otimes g)=\phi^* (\partial_0 \otimes \partial_0) W_\Omega (\bar{g} \otimes g) \,,
\ee
we have for the first part of the bound
\bea \label{eqn:minklineone}
\mathfrak{Q}_1[f]&=&\int_0^\infty \frac{d\alpha}{\pi} \left( \phi^* (\hat{\rho}_1 W_\Omega)(\bar{f}_\alpha,f_\alpha)+2\xi \alpha^2 \phi^* W_\Omega (\bar{f}_\alpha,f_\alpha) \right)\nonumber\\
&=&\int_0^\infty \frac{d\alpha}{\pi}\left( \left( (1-2\xi)(\partial \otimes \partial)-\frac{2\xi}{n-2} m^2+2\xi \alpha^2\right)\phi^* W_\Omega\right) (\bar{f}_\alpha,f_\alpha)\nonumber\\
&=& \int_0^\infty \frac{d\alpha}{\pi} \int dt \, dt' e^{-i \alpha (t-t')} f(t) f(t') \times \nonumber\\
&& \qquad \qquad \int d\mu (\bm{k}) \left( \omega^2(\bm{k}) (1-2\xi)-\frac{2\xi}{n-2} m^2+2\xi \alpha^2\right) e^{-i (t-t')\omega(\bm{k})} \nonumber\\
&=& \frac{S_{n-2}}{(2\pi)^n} \int_0^\infty d\alpha \int_0^\infty dk \frac{k^{n-2}}{\omega(k)} \left( \omega^2(k) (1-2\xi)-\frac{2\xi}{n-2} m^2+2\xi \alpha^2\right) |\hat{f}(\alpha+\omega(k))|^2 \,. \nonumber
\eea
At the last step we have passed to spherical polar coordinates and written $S_{n-2}$ for the volume of the $(n-2)$-dimensional standard unit sphere. Our convention for the Fourier transform of $f$ is
\be
\hat{f}(\omega)=\int dt \, e^{i\omega t} f(t) \,.
\ee

We can make the change of variables
\be
u=\alpha+\omega(k) \,, \qquad v=\omega(k) \,,
\ee
and write $\mathfrak{Q}_1 [f]$ as 
\bea \label{eqn:oaq}
\mathfrak{Q}_1 [f]= \frac{S_{n-2}}{(2\pi)^n} \int_m^\infty du |\hat{f}|^2(u) \int_m^u dv (v^2-m^2)^{(n-3)/2} \left(v^2-4\xi u v+2 \xi\left(u^2-\frac{1}{n-2}m^2\right)\right) \,. \nonumber \\
\eea
Using Eq.~(\ref{eqn:minklineone}) and the fact that $\mathfrak{Q}_3$ vanishes at flat spacetime Theorem \ref{the:linegen} becomes 

\begin{theorem}
\label{the:minkline}
In n-dimensional Minkowski space we have, for $0\le\xi\le 2\xi_c$,  
\be
\label{eqn:linemink}
\langle \nord{\rho_U} \circ \gamma\rangle_\omega (f^2) \geq -\left[\mathfrak{Q}_1[f] \mathbb{1}+\langle \nord{\Phi^2} \circ \gamma \rangle_\omega (\mathfrak{Q}_2[f]) \right] \,,
\ee
where $\gamma$ is a timelike geodesic,
\bea
\mathfrak{Q}_1 [f]&=& \frac{S_{n-2}}{(2\pi)^n} \int_m^\infty du\, u^n |\hat{f}|^2 (u) \times \\
&&\qquad \left( \frac{1}{n} Q_{n,2} \left(\frac{u}{m}\right)- \frac{4\xi}{n-1} Q_{n,1} \left(\frac{u}{m}\right)+ \frac{2\xi}{n-2} \left(1-\frac{m^2}{u^2(n-2)}\right) Q_{n,0} \left(\frac{u}{m}\right) \right)\nonumber \,,
\eea
and
\be
\mathfrak{Q}_2[f](t)=\frac{1-2\xi}{n-2} m^2 f^2(t) +2\xi  ( f'(t))^2 \,,
\ee
for $f\in \mathcal{D}(\mathbb{R})$ and $\xi \in \left[0,\frac{n-2}{2(n-1)}\right]$.
The functions $Q_{n,r}$ are defined by
\be \label{eqn:qdef}
Q_{n,r}(y)= \frac{n+r-2}{y^{n+r-2}} \int_1^y dx (x^2-1)^{(n-3)/2} x^r \,.
\ee
\end{theorem}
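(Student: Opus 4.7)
The plan is to specialize the general worldline QEI of Theorem~\ref{the:linegen} to Minkowski space with an inertial observer, and then evaluate the resulting integrals explicitly. Since $\gamma$ is a geodesic, the acceleration field $A^\mu$ vanishes, removing the $\nabla_A \nord{\Phi^2}$ term. Since the spacetime is flat, $\mathcal{R}_\xi = 0$, eliminating $\mathfrak{Q}_3[f]$. Thus only $\mathfrak{Q}_1[f]$ and the $\mathfrak{Q}_2[f]$ contribution survive, and the latter already has the form stated in the theorem. The task therefore reduces to computing $\mathfrak{Q}_1[f]$ with the Minkowski vacuum $\Omega$ as reference state.

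First I would insert the explicit vacuum two-point function~\eqref{eqn:vac} into the definition of $\mathfrak{Q}_1[f]$, together with the inertial-coordinate expression~\eqref{eqn:rho1mink} for $\hat{\rho}_1$. The key simplification comes from applying the mass-shell identity
\begin{equation}
\sum_{i=1}^{n-1} (\partial_i\otimes\partial_i) W_\Omega + m^2 W_\Omega = (\partial_0\otimes\partial_0) W_\Omega
\end{equation}
inside the coincidence limit along $\gamma$, which converts the transverse-derivative contribution into a time-derivative term plus a mass correction. This collapses $\hat{\rho}_1 W_\Omega + 2\xi\alpha^2 W_\Omega$, at coincidence along $\gamma$, into
\begin{equation}
\left((1-2\xi)(\partial\otimes\partial) - \tfrac{2\xi}{n-2}m^2 + 2\xi\alpha^2\right)\phi^* W_\Omega,
\end{equation}
exactly as used to arrive at~\eqref{eqn:minklineone}. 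Performing the $\tau,\tau'$ integrals against $f_\alpha,\bar f_\alpha$ then produces a mod-squared Fourier transform $|\hat{f}(\alpha+\omega(k))|^2$, and passing to spherical coordinates in $\mathbf k$ introduces the factor $S_{n-2}$.

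Next I would change variables from $(\alpha,k)$ to $(u,v) = (\alpha+\omega(k),\omega(k))$, as suggested in~\eqref{eqn:oaq}. The Jacobian is unity, the region $\alpha\ge 0$, $k\ge 0$ becomes $m \le v \le u$, and $k^{n-2}\,dk/\omega(k)$ becomes $(v^2-m^2)^{(n-3)/2}\,dv$. One obtains
\begin{equation}
\mathfrak{Q}_1[f] = \frac{S_{n-2}}{(2\pi)^n} \int_m^\infty du\,|\hat f(u)|^2 \int_m^u dv\,(v^2-m^2)^{(n-3)/2}\left(v^2 - 4\xi u v + 2\xi\bigl(u^2 - \tfrac{m^2}{n-2}\bigr)\right).
\end{equation}
Finally I would evaluate the inner $v$-integral term by term: substituting $v = m x$ and comparing with~\eqref{eqn:qdef} yields
\begin{equation}
\int_m^u dv\,(v^2-m^2)^{(n-3)/2} v^r = \frac{u^{n+r-2}}{n+r-2}\,Q_{n,r}(u/m),
\end{equation}
for $r=0,1,2$. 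Collecting these and factoring out $u^n$ delivers the bracketed expression in the theorem.

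The substantive content is the mass-shell reduction in the first step; once that identity is used and the variables $(u,v)$ are introduced, the remainder is a direct calculation. The only mild subtlety is justifying the rearrangements of the oscillatory integrals, but this is handled by the rapid decay of $|\hat f|$ (a Schwartz function, since $f\in\mathscr{D}(\mathbb R)$) together with the polynomial growth of the integrand in $u$ and the compact $v$-range, so Fubini's theorem applies throughout.
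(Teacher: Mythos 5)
Your proposal is correct and follows essentially the same route as the paper: specialise Theorem~\ref{the:linegen} to an inertial geodesic with the vacuum as reference state (so the acceleration and $\mathfrak{Q}_3$ terms drop), apply the mass-shell identity to reduce $\hat{\rho}_1 W_\Omega + 2\xi\alpha^2 W_\Omega$ to the form in~\eqref{eqn:minklineone}, change variables to $(u,v)=(\alpha+\omega(k),\omega(k))$ to reach~\eqref{eqn:oaq}, and evaluate the inner integral via $\int_m^u dv\,(v^2-m^2)^{(n-3)/2}v^r = u^{n+r-2}Q_{n,r}(u/m)/(n+r-2)$. All coefficients check out against the stated $\mathfrak{Q}_1[f]$, and your remark on justifying the interchange of integrals matches the (implicit) use of rapid decay of $\hat f$ in the paper.
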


The functions $Q_{n,r}$ are non-negative functions for $n+r \geq 2$ and they vanish for $n+r=2$. Also they have the properties 
\be
Q_{n,r} (1)=0\,, \qquad \lim_{y\to \infty} Q_{n,r}(y)=1 \,.
\ee
Since $n\geq 3$ and $\xi \leq \frac{n-2}{2(n-1)}$ 
\be \label{eqn:linelim}
\mathfrak{Q}_1 [f] \leq \frac{2S_{n-2}}{(2\pi)^n(n-1)} \int_0^\infty du \, u^n |\hat{f}|^2 (u)  \,,
\ee
where we also assumed that $m>0$.

Using Eq.~(\ref{eqn:linelim}) we can investigate the behavior of the bound under rescaling of the smearing function $f$, and in particular whether the SEC holds in an averaged sense along a complete timelike geodesic. This question is prompted by the analogous situation for the energy density, in which an averaged weak energy condition (AWEC) can be proved for the nonminimally coupled scalar field under mild assumptions on the growth of the Wick square along the geoedesic~\cite{Fewster:2007ec}. In fact, we will not be able to prove a direct analogue of the AWEC result, but instead a slight modification of it. 

First we define the smearing function $f_\lambda$ for $\lambda \in \mathbb{R}$ to be
\be
f_\lambda(t)=\frac{f(t/\lambda)}{\sqrt{\lambda}} \,.
\ee
Then its Fourier transform satisfies
\be 
\left( \hat{f}_\lambda (u) \right)^2=\lambda \left(\hat{f}(\lambda u) \right)^2 \,,
\ee
and, by analogy with the averaged weak energy condition (AWEC)~\cite{Fewster:2007ec}, we would say that the averaged strong energy condition (ASEC) holds in state $\omega$ if 
\begin{equation}
\liminf_{\lambda\to +\infty}\lambda  \langle \nord{\rho_U} \circ \gamma \rangle_\omega (f_\lambda^2) \ge 0
\end{equation}
because the left-hand side is a measure of the total integral of the EED (up to a factor of $f(0)$) along $\gamma$. 

Using Eq.~(\ref{eqn:linelim}) we have that $\mathfrak{Q}_1 [f_\lambda]=O(\lambda^{-n})$ as $\lambda \to \infty$. Then $\lambda \mathfrak{Q}_1 (f_\lambda) \to 0$ as $\lambda \to \infty$. For the state dependent part of the bound we have
\be
\lambda\langle \nord{\Phi^2} \circ \gamma \rangle_\omega (\mathfrak{Q}_2[f_\lambda])
%=\frac{1-2\xi}{n-2}m^2 \int dt \langle \nord{\Phi^2} \rangle_\omega (t) \lambda f_\lambda (t)^2+2\xi \int dt \langle \nord{\Phi^2} \rangle_\omega (t)\lambda f'_\lambda (t)^2\nonumber \\
= \frac{1-2\xi}{n-2} m^2 \int dt\, \langle \nord{\Phi^2} \rangle_\omega (\gamma(t)) f (t/\lambda)^2 +\frac{2\xi}{\lambda^2} \int dt\, \langle \nord{\Phi^2} \rangle_\omega (\gamma(t)) f' (t/\lambda)^2 \,.
\ee
If we assume that $ \langle \nord{\Phi^2} \rangle_\omega (t)$ is absolutely integrable, then the dominated convergence theorem implies that the first term converges to a constant while the second term goes to zero for $\lambda \to \infty$. This gives a bound 
\begin{equation}
\liminf_{\lambda\to\infty}\lambda  \langle \nord{\rho_U} \circ \gamma \rangle_\omega (f_\lambda^2) \ge -\frac{1-2\xi}{n-2} m^2 f(0)^2 \int dt\, \langle \nord{\Phi^2} \rangle_\omega (\gamma(t))
\end{equation}
rather than the ASEC in the form originally stated. This does not show that ASEC
cannot hold, but rather that it cannot be derived from the QSEI by scaling methods.
Instead, what can be proved is that 
\begin{equation}
\liminf_{\lambda\to +\infty}\lambda  \left\langle \left(\nord{\rho_U}+ 
\frac{1-2\xi}{n-2} m^2 \nord{\Phi^2}  \right)
 \circ \gamma \right\rangle_\omega (f_\lambda^2) \ge 0,
\end{equation}
provided that $|\langle \nord{\Phi^2} \rangle_\omega (\gamma(t))|\le c (1+|t|)^{1-\epsilon}$ for positive constants $c$, $\epsilon$ (we may, without loss, assume that $0<\epsilon<1$). The proof is simple: for $\lambda\ge 1$ we have
\begin{align}
\lambda \left\langle \left(\nord{\rho_U}+ 
\frac{1-2\xi}{n-2} m^2 \nord{\Phi^2}  \right)
\circ \gamma \right\rangle_\omega (f_\lambda^2) &\ge -\frac{2\xi}{\lambda^2} \int dt\, \langle \nord{\Phi^2} \rangle_\omega (\gamma(t)) f' (t/\lambda)^2\notag\\
&\ge -\frac{2c\xi}{\lambda^\epsilon} \int du\, (1+|u|)^{1-\epsilon} f' (u)^2  \,,
\end{align} 
where we have changed variables from $t$ to $u=t/\lambda$  and used the fact that 
 $|\langle \nord{\Phi^2} \rangle_\omega (\gamma(\lambda u))|\le c\lambda^{1-\epsilon} (1+|u|)^{1-\epsilon}$ if $\lambda\ge 1$. As the right-hand side vanishes in the limit $\lambda\to+\infty$ the result follows. It would be interesting to determine whether
 the ASEC is actually violated in some Hadamard states, but we do not pursue this here.

\subsection{Worldvolume}

Now we turn to the worldvolume quantum inequalities of Theorems \ref{thm:qvol1} and \ref{thm:qvol2}. Again we choose our reference state to be the vacuum state with two-point function given by Eq.~(\ref{eqn:vac}). Additionally,  we require the vector field $U^\mu$ to be translationally invariant. Then we can choose an inertial coordinate system for which $U^\mu$ is purely in the direction of $t$. We suppress the distinction between $f$ and its coordinate expression so $f$ becomes identical with $h$. 
In Minkowski space the operators $\hat{\rho}^\RN{1}$ and $\hat{\rho}^\RN{2}$ become 
\be
\hat{\rho}^\RN{1}=\left(1-\frac{2\xi}{n-2}\right) (\partial_0 \otimes \partial_0)+{\sum_{i=1}^{n-1}}\frac{2\xi}{n-2}(\partial_i \otimes \partial_i) \,.
\ee
and
\be
\hat{\rho}^\RN{2}=\left(\frac{n-3}{n-2}\right) (\partial_0 \otimes \partial_0)+{\sum_{i=1}^{n-1}}\frac{1}{n-2}(\partial_i \otimes \partial_i) \,,
\ee
Then the state independent part of Theorem \ref{thm:qvol1} becomes
\bea
\mathfrak{Q}_1^\RN{1}[f]&&= 2 \int_{\mathbb{R}^+ \times \mathbb{R}^{n-1}} \frac{d^n \alpha}{(2\pi)^n} (\hat{\rho}^\RN{1} \, W_\Omega)(\bar{f}_\alpha,f_\alpha)=\nonumber\\ 
&&= \int_{\mathbb{R}^+ \times \mathbb{R}^{n-1}} \frac{d^n \alpha}{(2\pi)^n} \int \frac{d^{n-1} \bm{k}}{(2\pi)^{n-1}} \frac{1}{\omega(\bm{k})} \left[ \omega(\bm{k})^2-\frac{2\xi}{n-2} m^2 \right]  |\hat{f}(\alpha+k)|^2 \,, 
\eea
where we used $U^\mu k_\mu= \omega(\bm{k})$ and the convention
\begin{equation}
\hat{f}(k)=\int_{\mathbb{R}^{n}} d^nx\, e^{ik_\mu x^\mu}f(x).
\end{equation}
With the change of variables $(\bm{k},\bm{\alpha})\to(\bm{k},\bm{u})$, where
\be \label{eqn:varone}
\bm{u}=\bm{\alpha}+\bm{k} \,,
\ee
$\mathfrak{Q}^\RN{1}_1[f]$ becomes
\bea
\mathfrak{Q}_1^\RN{1}[f]&=& \int_0^\infty \frac{d \alpha_0}{(2\pi)^n} \int \frac{d^{n-1} \bm{k}}{(2\pi)^{n-1}} \frac{1}{\omega(\bm{k})} \left[ \omega(\bm{k})^2-\frac{2\xi}{n-2} m^2 \right] \nonumber\\
&& \qquad \qquad \qquad \times 
\int_{\mathbb{R}^{n-1}} d^{n-1} \bm{u} |\hat{f}(\alpha_0+\omega(\bm{k}),\bm{u})|^2 \,.
\eea
Performing a second change of variables 
\be
u_0=\alpha_0+\omega(\bm{k}) \,,
\ee
and changing the order of integration gives
\bea
\mathfrak{Q}_1^\RN{1}[f]&=&  \int \frac{d^{n-1} \bm{k}}{(2\pi)^{2n-1}} \frac{1}{\omega(\bm{k})} \left[ \omega(\bm{k})^2-\frac{2\xi}{n-2} m^2 \right]  \int_{\omega(\bm{k})}^\infty du_0 \int_{\mathbb{R}^{n-1}} d^{n-1} \bm{u}  |\hat{f}(u)|^2 \nonumber\\
 &=& \frac{S_{n-2}}{(2\pi)^{2n-1}} \int d k\frac{k^{n-2}}{\omega(k)} \left[ \omega(k)^2-\frac{2\xi}{n-2} m^2 \right] \int_{\omega(k)}^\infty du_0 \int_{\mathbb{R}^{n-1}} d^{n-1} \bm{u}  |\hat{f}(u)|^2 \,,
\eea
where we transitioned to spherical coordinates, using the fact that the integrand is spherically symmetric in $\bm{k}$. Writing $k$ in terms of $\omega$ gives
\bea
\mathfrak{Q}_1^\RN{1}[f]&=& \frac{S_{n-2}}{(2\pi)^{2n-1}} \int_m^\infty du_0  \int_{ \mathbb{R}^{n-1}} d^{n-1} \bm{u}\, |\hat{h}(u)|^2 \int_m^{u_0} d\omega\, (\omega^2-m^2)^{(n-3)/2} \nonumber\\
&&\qquad \qquad \qquad \qquad  \qquad \qquad \times \left[  \omega^2-\frac{2\xi}{n-2} m^2\right] \,.
\eea

Using the functions $Q_{n,k}$ defined in Eq.~\eqref{eqn:qdef} and noticing that $\mathcal{R}_\xi$ and $\mathcal{R}_{1/2}$ vanish in flat spacetime, the following theorem is immediate
\begin{theorem}
\label{the:minkvol1}
In $n$-dimensional Minkowski space for $\xi \in [ 0, (n-2)/2]$, the QSEI of Theorem~\ref{thm:qvol2} reduces to
\be
\langle\nord{\rho_U} (f^2) \rangle_\omega \geq -\left[\mathfrak{Q}_1^\RN{1}[f] \mathbb{1}+  \left \langle\nord{\Phi^2}\left(\mathfrak{Q}^{\RN{1}}_2[f]\right)\right\rangle_\omega \right]
\,, 
\ee
where
\bea
\mathfrak{Q}_1^\RN{1}[f]&=&\frac{S_{n-2}}{(2\pi)^{2n-1}} \int_m^\infty du_0  \int d^{n-1} \bm{u}\, |\hat{f}(u)|^2 u_0^n \nonumber\\
&&\qquad \qquad \qquad  \times \left[ \frac{1}{n}Q_{n,2} \left(\frac{u_0}{m}\right) - \frac{m^2}{u_0^2} \frac{2\xi}{(n-2)^2} Q_{n,0} \left(\frac{u_0}{m}\right) \right]  \,,
\eea
and
\be
\mathfrak{Q}^{\RN{1}}_2[f] = \xi \partial_0^2(f^2) + \frac{1-2\xi}{n-2} m^2 f^2 \,.
\ee
\end{theorem}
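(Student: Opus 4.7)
The bulk of the work has already been done in the display chains immediately preceding the theorem statement, which express $\mathfrak{Q}_1^{\RN{1}}[f]$ for the Minkowski vacuum as
\begin{equation}
\mathfrak{Q}_1^{\RN{1}}[f] = \frac{S_{n-2}}{(2\pi)^{2n-1}} \int_m^\infty du_0 \int d^{n-1}\bm{u}\, |\hat{f}(u)|^2 \int_m^{u_0} d\omega\, (\omega^2-m^2)^{(n-3)/2} \left[\omega^2 - \frac{2\xi}{n-2} m^2\right].
\end{equation}
My plan is to recognise the inner $\omega$-integral as a combination of two of the auxiliary functions $Q_{n,r}$ defined in \eqref{eqn:qdef}, and then to simplify $\mathfrak{Q}^{\RN{1}}_2[f]$ in the flat, translationally invariant setting to complete the identification with the stated bound.

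First I would perform the substitution $\omega = mx$ in the inner integral. The range of integration becomes $x\in[1, u_0/m]$ and a factor $m^{n+r-2}$ emerges from $d\omega\, (\omega^2-m^2)^{(n-3)/2}\omega^r$. Comparing with \eqref{eqn:qdef} then yields
\begin{equation}
\int_m^{u_0} d\omega\,(\omega^2-m^2)^{(n-3)/2}\omega^r = \frac{u_0^{n+r-2}}{n+r-2}\, Q_{n,r}\!\left(\frac{u_0}{m}\right),
\end{equation}
for $r=0,2$. Substituting these two identities into the inner integral and pulling out an overall factor of $u_0^n$ produces exactly the bracketed combination $\frac{1}{n}Q_{n,2}(u_0/m) - \frac{m^2}{u_0^2}\frac{2\xi}{(n-2)^2}Q_{n,0}(u_0/m)$ claimed in the theorem.

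Next I would reduce $\mathfrak{Q}^{\RN{1}}_2[f]$ from~\eqref{eq:QAU}. In Minkowski spacetime $\mathcal{R}_\xi$ vanishes, and since $U^\mu$ is chosen translationally invariant in the $t$-direction, $\nabla_\mu\nabla_\nu(f^2 U^\mu U^\nu) = \partial_0^2 f^2$. The expression for $\mathfrak{Q}^{\RN{1}}_2[f]$ therefore collapses to $\xi\partial_0^2(f^2) + \tfrac{1-2\xi}{n-2}m^2 f^2$, as stated. Finally, one invokes Theorem~\ref{thm:qvol1} directly: as the hypothesis $\xi\in[0,2(n-1)\xi_c]$ becomes $\xi\in[0,(n-2)/2]$, the bound in~\eqref{eqn:qboundone} takes the form asserted here.

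I expect no real obstacle: the proof amounts to bookkeeping, and the only place where mistakes can creep in is the combinatorics of the change of variables $\omega\mapsto x$ and the extraction of the common $u_0^n$ prefactor, where the factors $1/n$ and $1/(n-2)^2$ must emerge correctly. A brief consistency check is that the $r=2$ term gives the leading, mass-independent quartic ultraviolet behaviour and that the subleading $Q_{n,0}$ correction carries the factor $m^2/u_0^2$, matching the expected low-frequency suppression of the mass contribution.
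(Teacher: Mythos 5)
Your proposal is correct and follows essentially the same route as the paper: the paper's own argument consists of the display chain reducing $\mathfrak{Q}_1^{\RN{1}}[f]$ to the triple integral, followed by the observation that the $\omega$-integral is the stated combination of $Q_{n,0}$ and $Q_{n,2}$ and that $\mathcal{R}_\xi$ vanishes in flat spacetime, and your substitution $\omega=mx$ and extraction of the $u_0^n$ prefactor reproduce the coefficients $1/n$ and $2\xi/(n-2)^2$ exactly. You are also right to invoke Theorem~\ref{thm:qvol1} rather than Theorem~\ref{thm:qvol2}; the latter reference in the statement is evidently a typographical slip, since the quantities involved are $\hat{\rho}^{\RN{1}}$ and $\mathfrak{Q}^{\RN{1}}_{1,2}$.
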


Similarly the state independent part of Theorem \ref{thm:qvol2} becomes
\bea
\mathfrak{Q}_1^\RN{2}[f]&=& 2 \int_{\mathbb{R}^+ \times \mathbb{R}^{n-1}} \frac{d^n \alpha}{(2\pi)^n} (\hat{\rho}^\RN{2} \, W_\Omega)(\bar{f}_\alpha,f_\alpha) \nonumber\\
&&= \int_{\mathbb{R}^+ \times \mathbb{R}^{n-1}} \frac{d^n \alpha}{(2\pi)^n} \int \frac{d^{n-1} \bm{k}}{(2\pi)^{n-1}} \frac{1}{\omega(\bm{k})} \left[ \omega(\bm{k})^2-\frac{1}{n-2} m^2 \right]  |\hat{f}(\alpha+k)|^2 \,, 
\eea
and one may show
\begin{theorem}
	\label{the:minkvol2}
	In $n$-dimensional Minkowski space for $\xi \in \mathbb{R}$, the QSEI of Theorem~\ref{thm:qvol2} reduces to
	\bea
	\langle\nord{\rho_U} (f^2) \rangle_\omega &\geq& - \left[\mathfrak{Q}_1^\RN{2}[f]\mathbb{1}+ \left \langle\nord{\Phi^2}\left(\mathfrak{Q}^{\RN{2}}_2[f]\right)\right\rangle_\omega \right]
	\,, 
	\eea
	where 
	\bea
	\mathfrak{Q}_1^\RN{2}[f]&=& \frac{S_{n-2}}{(2\pi)^{2n-1}} \int_m^\infty du_0  \int d^{n-1} \bm{u} |\hat{f}(u)|^2 u_0^n \nonumber\\
	&&\qquad \qquad \qquad \qquad  \times \left[ \frac{1}{n} Q_{n,2} \left(\frac{u_0}{m}\right) -\frac{m^2}{u_0^2} \frac{1}{(n-2)^2} Q_{n,0} \left(\frac{u_0}{m}\right) \right]  \,,
	\eea
	and
	\be
\mathfrak{Q}^{\RN{2}}_2[f] = \xi \partial_0^2(f^2)
-\frac{1-2\xi}{2(n-2)} \Box_g f^2  \,.
	\ee
	\end{theorem}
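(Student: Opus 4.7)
The plan is to specialise Theorem \ref{thm:qvol2} to the $n$-dimensional Minkowski background with the vacuum state $\Omega$ as the Hadamard reference, following the same template just used for Theorem \ref{the:minkvol1}. First I would take $U^\mu$ translationally invariant and pick inertial coordinates in which $U^\mu=\delta^\mu_0$, so that $(-g_\kappa)^{1/4}\equiv 1$ and $h_\alpha(x)=e^{i\alpha_\mu x^\mu}f(x)$. Since Minkowski spacetime is Ricci-flat, $\mathcal{R}_{1/2}$ in \eqref{eq:QBU} vanishes identically, and the term $\nabla_\nu\nabla_\mu(f^2 U^\mu U^\nu)$ reduces to $\partial_0^2 f^2$; this already gives the stated form of $\mathfrak{Q}^{\RN{2}}_2[f]$.

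Next I would evaluate $\mathfrak{Q}_1^\RN{2}[f]$. Substituting the momentum representation \eqref{eqn:vac} of $W_\Omega$ into $\hat{\rho}^\RN{2}$, the time derivatives contribute $\omega(\bm{k})^2$ and the spatial derivatives $\bm{k}^2$; using $\bm{k}^2=\omega(\bm{k})^2-m^2$ together with the identity $\frac{n-3}{n-2}+\frac{1}{n-2}=1$ produces a scalar weight $\omega(\bm{k})^2 - m^2/(n-2)$ multiplying $e^{-ik_\mu(x-x')^\mu}$. This is structurally identical to the analogous step carried out for $\hat{\rho}^\RN{1}$ in the proof of Theorem \ref{the:minkvol1}, with the only change being that $2\xi/(n-2)$ there is replaced by $1/(n-2)$ here. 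Accordingly, inserting this into the defining formula for $\mathfrak{Q}_1^\RN{2}[f]$ yields an integrand of the form
\begin{equation}
\int d\mu(\bm{k})\left[\omega(\bm{k})^2-\frac{m^2}{n-2}\right]|\hat{f}(\alpha+k)|^2.
\end{equation}

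Then I would apply the same sequence of changes of variables used for Theorem \ref{the:minkvol1}: first $\bm{u}=\bm{\alpha}+\bm{k}$ to extract $|\hat{f}(\alpha_0+\omega(\bm{k}),\bm{u})|^2$, then $u_0=\alpha_0+\omega(\bm{k})$, which turns the $\alpha_0$-integration range into $[\omega(\bm{k}),\infty)$; swapping the order of integration and passing to spherical polar coordinates in $\bm{k}$ (using spherical symmetry of the weight in $\bm{k}$) reduces the computation to
\begin{equation}
\frac{S_{n-2}}{(2\pi)^{2n-1}}\int_m^\infty du_0\int d^{n-1}\bm{u}\,|\hat{f}(u)|^2\int_m^{u_0}d\omega\,(\omega^2-m^2)^{(n-3)/2}\left[\omega^2-\frac{m^2}{n-2}\right].
\end{equation}
Splitting the $\omega$-integral into its two pieces and applying the definition \eqref{eqn:qdef} of $Q_{n,r}$ with $r=2$ and $r=0$ (after rescaling $\omega=mx$) reproduces exactly the claimed factors $\tfrac{1}{n}Q_{n,2}(u_0/m)$ and $\tfrac{m^2}{u_0^2(n-2)^2}Q_{n,0}(u_0/m)$.

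No serious obstacle is expected: the argument is entirely parallel to that for Theorem \ref{the:minkvol1}, and the only point requiring bookkeeping care is the different numerical coefficient multiplying the mass term, which propagates into the $Q_{n,0}$ prefactor. Note also that the range $\xi\in\mathbb{R}$ (rather than $\xi\in[0,(n-2)/2]$) is available here because Lemma \ref{lem:Qworl} is applied only to $\hat{\rho}^\RN{2}$, whose coefficients contain no $\xi$-dependence.
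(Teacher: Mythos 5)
Your proposal is correct and follows essentially the same route as the paper, which derives this theorem by running the identical chain of substitutions and changes of variables used for Theorem~\ref{the:minkvol1}, with $\hat{\rho}^{\RN{2}}$ in place of $\hat{\rho}^{\RN{1}}$ so that the momentum-space weight becomes $\omega(\bm{k})^2 - m^2/(n-2)$. Your bookkeeping of the $Q_{n,2}$ and $Q_{n,0}$ prefactors, the vanishing of $\mathcal{R}_{1/2}$, and the reason the full range $\xi\in\mathbb{R}$ is admissible all match the paper's (largely implicit) argument.
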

Noting the properties of the functions $Q_{n,k}$ and extending the integration domain, both $\mathfrak{Q}_1^\RN{1}[f]$ and $\mathfrak{Q}_1^\RN{2}[f]$ can be bounded above by the expression
\begin{equation}
\frac{S_{n-2}}{(2\pi)^{2n-1}} \int_0^\infty du_0 \,  \int d^{n-1} \bm{u}\, |\hat{f}(u)|^2 \frac{u_0^n}{n} 
\end{equation}
on the ranges of $\xi$ for which Theorems~\ref{the:minkvol1} and~\ref{the:minkvol2} are valid.

\section{KMS states and temperature scaling}
\label{sec:KMS}

The QEI bounds we have obtained depend on the state, in contrast to the original QEIs, \cite{Ford:1978qya, Ford:1990id, Ford:1996er, Flanagan:1997gn, Pfenning:1997rg, Fewster:1998pu, Fewster:1999gj, Fewster:2007rh} that provide state-independent lower bounds. 
Clearly there are some very uninteresting state-dependent bounds, such as the trivial bound in which the averaged stress-energy tensor is simply bounded below by itself! It is therefore important to explain in what way our state-dependent bounds are nontrivial. The strategy we adopt follows~\cite{Fewster:2006iy,Fewster:2007ec} in which a state-dependent lower bound of the schematic form
\begin{equation}
\langle \rho(f)\rangle_\omega \ge -\langle Q(f)\rangle_\omega
\end{equation}
is regarded as nontrivial provided there are no constants $c$ and $c'$ (perhaps depending on $f$) for which
\begin{equation}
|\langle \rho(f)\rangle_\omega| \le c+ c'  \langle Q(f)\rangle_\omega 
\end{equation}
holds for all physically reasonable states $\omega$. This indicates that the lower bound is
relatively small, in comparison with the possible magnitude of quantity that is being bounded. A good way to establish nontriviality is to consider a family of states in which the averaged energy density tends to infinity more rapidly than the bound does. 

In this section we will do this by examining the behaviour of the bounds we derived for thermal states in $n$-dimensional Minkowski space, letting the temperature become large. We start with the worldline inequality of Theorem \ref{the:minkline}. 

\subsection{Worldline}

We fix inertial coordinates $(t,\bm{x})$ on $n$-dimensional Minkowski spacetime, for $n>3$, and consider 
the averaged energy density along the inertial trajectory $\gamma(\tau)=(\tau,\bm{0})$.
Let $\omega_\beta$ be the KMS state at inverse temperature $\beta$, with respect to the time parameter $t$. The state $\omega_\beta$ is Hadamard, with the two-point function
\be
W_\beta(t,\bm{x},t',\bm{x}')=\int d\mu(\bm{k}) \left( \frac{e^{-ik_\mu (x-x')^\mu}}{1-e^{-\beta\omega(\bm{k})}}+\frac{e^{ik_\mu (x-x')^\mu}}{e^{\beta\omega(\bm{k})}-1}  \right) \,,
\ee
where $\mu(\bm{k})$ is given by Eq.~(\ref{eqn:measure}). 

After normal ordering with respect to the ground state of Eq.~(\ref{eqn:vac}) we find\footnote{The corresponding expression in Ref.~\cite{Fewster:2007ec} is missing the factor $\beta^{2-n}$ in one place, but the final results are correct.} (abbreviating $\langle\cdot\rangle_{\omega_\beta}$ as $\langle\cdot\rangle_\beta$)
\be \label{eqn:betaone}
\langle \nord{\Phi^2}\rangle_\beta= \beta^{2-n} B_{n,0}(\beta m) \,,
\ee
where $B_{n,r}$ is defined on $[0,\infty)$ for $r \geq 0$ by
\be 
\label{eqn:beta}
B_{n,r}(\alpha)=\frac{S_{n-2}}{(2\pi)^{n-1}} \int_\alpha^\infty dz (z^2-\alpha^2)^{(n-3)/2} \frac{z^r}{e^z-1} \,.
\ee
As $\omega_\beta$ is time-translationally invariant, the state-dependent part of the bound in Theorem \ref{the:minkline} for that state is
\be
\langle \nord{\Phi^2} \circ \gamma \rangle_{\beta} (\mathfrak{Q}_2[f])=  \beta^{2-n} B_{n,0}(\beta m) \left( \frac{1-2\xi}{n-2} m^2  || f ||^2+2 \xi ||f'||^2\right) \,.
\ee
On the other hand, the expectation value of the renormalized EED is, after a calculation,
\bea
\label{eqn:reneed}
\langle \nord{\rho_U} \rangle_{\beta} (x)&=& \int d\mu (\bm{k}) \left(\omega(\bm{k})^2-\frac{1}{n-2}m^2\right) \frac{2}{e^{\beta \omega(\bm{k})}-1}  \nonumber\\
&=& \frac{S_{n-2}}{(2\pi)^{n-1}} \int_m^\infty d\omega \, \frac{(\omega^2-m^2)^{(n-3)/2}}{e^{\beta \omega}-1} \left( \omega^2-\frac{1}{n-2}m^2\right)  \,, 
\eea
Since $\langle \nord{\rho_U} \rangle_{\beta}$ is translationally invariant, the left hand side of Eq.~\eqref{eqn:linemink} is 
\be \label{eqn:betatwo}
\langle \nord{\rho_U} \circ \gamma \rangle_{\beta} (f^2)= \left( \beta^{-n} B_{n,2}(\beta m)-\frac{m^2}{n-2} \beta^{2-n} B_{n,0} (\beta m)\right) ||f||^2 \,.
\ee
Now we can state the following theorem

\begin{theorem}
The bound given in Theorem \ref{the:minkline} is nontrivial in the sense that there do not exist constants $c$ and $c'$ such that
\be
|\langle \nord{\rho_U} \circ \gamma \rangle_{\omega} (f^2) | \leq c+c' | \mathfrak{Q}_1 (f) \mathbb{1}+\langle \nord{\Phi^2} \circ \gamma \rangle_\omega \mathfrak{Q}_2(f) | \,,
\ee
for all Hadamard states $\omega$ unless $f$ is identically zero. 
\end{theorem}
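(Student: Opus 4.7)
The plan is to argue by contradiction. Suppose, for some fixed non-zero $f\in\mathscr{D}(\mathbb{R})$, real constants $c$ and $c'$ existed making the displayed estimate hold for every Hadamard state $\omega$. Evaluating on the one-parameter family of KMS states $\omega_\beta$ introduced in this section reduces the inequality to a purely numerical statement in $\beta>0$, because Eqs.~\eqref{eqn:betaone} and~\eqref{eqn:betatwo}, together with time-translation invariance of $\omega_\beta$, express both sides explicitly in terms of the integrals $B_{n,r}(\beta m)$ from~\eqref{eqn:beta} and of the (state-independent, $\beta$-independent) functional $\mathfrak{Q}_1[f]$. The strategy is then to compare the asymptotic behaviours of the two sides in the high-temperature limit $\beta\to 0^+$.

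First I would verify, via dominated convergence, that for $n>3$ and $r\in\{0,2\}$ one has $B_{n,r}(\beta m)\to B_{n,r}(0)$ as $\beta\to 0^+$, with $B_{n,2}(0)$ a strictly positive finite constant essentially of the form $\Gamma(n)\zeta(n)$ up to dimensional factors. From~\eqref{eqn:betatwo} this yields
\be
\langle \nord{\rho_U}\circ\gamma\rangle_{\omega_\beta}(f^2)=\beta^{-n}B_{n,2}(0)\|f\|^2+O(\beta^{2-n}),
\ee
so that its modulus diverges like $\beta^{-n}$ as $\beta\to 0^+$, provided $f\not\equiv 0$. On the other hand, exploiting once more the time-translation invariance of $\omega_\beta$ together with~\eqref{eqn:betaone}, the state-dependent piece on the right-hand side becomes
\be
\langle\nord{\Phi^2}\circ\gamma\rangle_{\omega_\beta}\bigl(\mathfrak{Q}_2[f]\bigr)=\beta^{2-n}B_{n,0}(\beta m)\left(\frac{1-2\xi}{n-2}m^2\|f\|^2+2\xi\|f'\|^2\right)=O(\beta^{2-n}),
\ee
while $\mathfrak{Q}_1[f]$ is independent of $\beta$. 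Hence the entire right-hand side $c+c'|\mathfrak{Q}_1(f)\mathbb{1}+\langle\nord{\Phi^2}\circ\gamma\rangle_{\omega_\beta}\mathfrak{Q}_2(f)|$ grows at most as $\beta^{2-n}$ as $\beta\to 0^+$.

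Dividing the supposed inequality through by $\beta^{-n}$ and passing to the limit $\beta\to 0^+$ then forces $B_{n,2}(0)\|f\|^2\le 0$, which is impossible when $f\not\equiv 0$; this contradiction completes the argument. The only step requiring any genuine care is the verification that $B_{n,2}(0)$ is strictly positive and finite and that the remainder is truly of order $\beta^{2-n}$, both of which reduce to elementary estimates on $\int_0^\infty z^{n-3+r}/(e^z-1)\,dz$ near the origin; these use the standing assumption $n>3$ in an essential way. Beyond this, the proof is just a power-counting in $\beta$, so no further obstacles are anticipated.
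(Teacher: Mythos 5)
Your proposal is correct and follows essentially the same route as the paper's own proof: evaluate both sides on the KMS family $\omega_\beta$, multiply by $\beta^n$, and take $\beta\to 0^+$ to obtain the contradiction $0<B_{n,2}(0)\|f\|^2\le 0$. The extra care you take over the finiteness and positivity of $B_{n,2}(0)$ and the $O(\beta^{2-n})$ remainder is a welcome refinement of details the paper leaves implicit, but it is not a different argument.
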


\begin{proof}
Assuming $f\not\equiv 0$, in the limit of high temperatures $\beta \to 0$ we have from Eqs.~(\ref{eqn:betaone}, \ref{eqn:betatwo}) 
\blea \label{eqn:proof}
\lim_{\beta \to 0} \beta^n \langle \nord{\rho_U} \circ \gamma \rangle_{\beta} (f^2)&=&  B_{n,2}(0) ||f||^2 >0 \\
\lim_{\beta \to 0} \beta^n \left( \mathfrak{Q}_1 (f) +\langle \nord{\Phi^2} \circ \gamma \rangle_{\beta} \mathfrak{Q}_2(f) \right) &=& 0 \,.
\elea
If the bound of Theorem \ref{the:minkline} were trivial there would exist constants $c,c'$ such that
\be
\lim_{\beta \to 0} \beta^n \langle \nord{\rho_U} \circ \gamma \rangle_{\beta} (f^2) \le   \lim_{\beta \to 0} \beta^n \left( c+c' | \mathfrak{Q}_1 (f) | +c' | \langle \nord{\Phi^2} \circ \gamma \rangle_{\beta} \mathfrak{Q}_2(f) | \right) \,.
\ee
But Eq.~(\ref{eqn:proof}) implies
\be
0<  B_{n,2}(0) ||f||^2 \le 0 \,,
\ee
which is a contradiction.

\end{proof}

\subsection{Worldvolume}

The two bounds of the worldvolume quantum inequalities of Theorems \ref{the:minkvol1} and \ref{the:minkvol2} are also state dependent.  
Evaluating them for a KMS state $\omega_\beta$ and using Eq.~(\ref{eqn:betaone}) gives
\be \label{eqn:betaonevol}
\langle \nord{\Phi^2} (\mathfrak{Q}_2^{\RN{1},\RN{2}}[f]) \rangle_\beta= \beta^{2-n} B_{n,0}(\beta m) \int  \dV \, \mathfrak{Q}_2^{\RN{1},\RN{2}}[f] \,.
\ee

The expectation value of the renormalized EED is given by Eq.~\eqref{eqn:reneed}.
So the left hand side of the inequalities of Theorems \ref{the:minkvol1} and \ref{the:minkvol2} for state $\omega_\beta$, becomes  
\be \label{eqn:betathreevol}
\langle \nord{\rho_U} (f^2) \rangle_\beta = \left( \beta^{-n} B_{n,2}(\beta m)-\frac{1}{n-2} \beta^{2-n} B_{n,0} (\beta m)\right) \int \dV f^2(x) \,.
\ee

\begin{theorem}
The bound given in Theorem~\ref{the:minkvol1}, resp., Theorem~\ref{the:minkvol2}, is nontrivial in the sense that there do not exist constants $c$ and $c'$ such that
\be 
\left|\langle \nord{\rho_U} (f^2) \rangle_\omega\right| \leq c+c'\left|\mathfrak{Q}_1^\RN{1}[f] \mathbb{1}+  \left \langle\nord{\Phi^2}\left(\mathfrak{Q}^{\RN{1}}_2[f]\right)\right\rangle_\omega \right|\,,
\ee
resp.,
\be 
\left|\langle  \nord{\rho_U} (f^2) \rangle_\omega\right| \leq c+c'\left|\mathfrak{Q}_1^\RN{2}[f] \mathbb{1}+  \left \langle\nord{\Phi^2}\left(\mathfrak{Q}^{\RN{2}}_2[f]\right)\right\rangle_\omega \right|\,,
\ee
for all Hadamard states $\omega$ unless $f$ is identically zero. 
\end{theorem}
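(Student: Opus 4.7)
The plan is to adapt the worldline nontriviality argument given at the end of Sec.~\ref{sub:qline}, exploiting the fact that when the proposed trivialising inequality is tested on the family of KMS states $\omega_\beta$, the EED itself scales with temperature more rapidly than any term on the right-hand side, yielding a contradiction as $\beta\to 0$.

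First, I would use the translation invariance of $\omega_\beta$, together with Eqs.~\eqref{eqn:betaonevol} and \eqref{eqn:betathreevol}, to evaluate both sides of the putative inequality for a KMS state explicitly. The relevant $\beta$-dependence is: (i) the state-independent coefficient $\mathfrak{Q}_1^{\RN{1},\RN{2}}[f]$ carries no $\beta$; (ii) the Wick-square contribution is proportional to $\beta^{2-n} B_{n,0}(\beta m)\int \dV\,\mathfrak{Q}_2^{\RN{1},\RN{2}}[f]$; and (iii) the LHS is dominated, as $\beta\to 0$, by $\beta^{-n} B_{n,2}(\beta m) \int \dV f^2$.

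Supposing for contradiction that constants $c, c'$ exist satisfying the proposed inequality for every Hadamard state, I would specialise to $\omega=\omega_\beta$, multiply through by $\beta^n$, and take $\beta\to 0$. The LHS tends to $B_{n,2}(0)\int\dV f^2$, which is strictly positive whenever $f\not\equiv 0$. On the right, $\beta^n\mathfrak{Q}_1^{\RN{1},\RN{2}}[f]\to 0$ because $\mathfrak{Q}_1^{\RN{1},\RN{2}}[f]$ is a finite $\beta$-independent constant, and $\beta^n\cdot \beta^{2-n}B_{n,0}(\beta m)=\beta^2 B_{n,0}(\beta m)\to 0$. The resulting inequality $0<B_{n,2}(0)\int\dV f^2\le 0$ is the desired contradiction, in direct parallel with the worldline proof.

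The main potential obstacle is ensuring that the state-dependent factor $\beta^2 B_{n,0}(\beta m)$ genuinely vanishes in the limit, so that the bound grows strictly more slowly than the EED. This follows immediately from the definition~\eqref{eqn:beta}: for $n\geq 4$ the integral $B_{n,0}(\beta m)$ tends to the finite constant $B_{n,0}(0)$, while in the marginal case $n=3$ it grows only like $|\log(\beta m)|$, which is beaten by the $\beta^2$ prefactor. With this growth comparison established, the argument is formally identical to the worldline case, with only the notational replacement of the worldline norm $\|f\|^2$ by the worldvolume integral $\int \dV f^2$, and works for both bounds simultaneously since the distinction between $\mathfrak{Q}_2^{\RN{1}}$ and $\mathfrak{Q}_2^{\RN{2}}$ enters only through a finite, $\beta$-independent prefactor on the vanishing term.
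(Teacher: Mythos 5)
Your proposal is correct and follows essentially the same route as the paper's own proof: evaluate both sides on the KMS family $\omega_\beta$ using Eqs.~\eqref{eqn:betaonevol} and \eqref{eqn:betathreevol}, multiply by $\beta^n$, and let $\beta\to 0$ to obtain the contradiction $0<B_{n,2}(0)\int\dV\,f^2\le 0$. Your additional remark verifying that $\beta^2 B_{n,0}(\beta m)\to 0$ (including the logarithmic marginal case) is a welcome piece of extra care, but does not change the argument.
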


\begin{proof}
Assuming $f\not\equiv 0$, in the limit of high temperatures $\beta \to 0$ we have from Eqs.~(\ref{eqn:betaonevol}, \ref{eqn:betathreevol}) 
\bml
\label{eqn:proofvol}
\bea
\lim_{\beta \to 0} \beta^n\langle  \nord{\rho_U} (f^2) \rangle_\beta =   B_{n,2}(0)\int d Vol f^2(x) &>&0 \\
\lim_{\beta \to 0} \beta^n \left( \mathfrak{Q}_1^{\RN{1},\RN{2}}[f] \mathbb{1}+  \left \langle\nord{\Phi^2}\left(\mathfrak{Q}^{\RN{1},\RN{2}}_2[f]\right)\right\rangle_\beta  \right) &=& 0 \,.
\elea
If the bounds of Theorems  \ref{the:minkvol1} and \ref{the:minkvol2} were trivial there would exist constants $c$ and $c'$ such that
\be
\lim_{\beta \to 0} \beta^n \langle  \nord{\rho_U}(f^2) \rangle_\beta \le  \lim_{\beta \to 0} \beta^n \left( c+c' \left| \mathfrak{Q}_1^{\RN{1},\RN{2}}[f] \mathbb{1}+  \left \langle\nord{\Phi^2}\left(\mathfrak{Q}^{\RN{1},\RN{2}}_2[f]\right)\right\rangle_\beta  \right| \right) \,.
\ee
But Eq.~(\ref{eqn:proofvol}) implies
\be
0<  B_{n,2}(0)\int d Vol f^2(x)  \le 0 \,,
\ee
which is a contradiction. 
\end{proof}

\section{Conclusions}
\label{sec:conclusions}

The main result of this paper is the derivation of state-dependent, but nontrivial, lower bounds for the renormalised effective energy density of the non-minimally coupled field, averaged either along timelike curves or over spacetime volumes. First, we discussed the quantisation of EED in the context of algebraic quantum field theory and developed a systematic framework to derive suitable differential operators, following~\cite{Hollands:2004yh}. Additionally we showed that while the quadratic Wick ordered expressions obey the Leibniz rule but not the field equation, the differences in their expectation values obey both, so the field equation can be used to simplify expressions in difference QEIs. Then we proceeded to establish both worldline and worldvolume bounds for the renormalised EED, for intervals of coupling constants including both minimal and conformal coupling  in all cases.  

Applying the results to Minkowski space we derived simplified worldline and worldvolume bounds, which are expected to hold to good approximation in circumstances where the spacetime is approximately flat or the sampling function has support that is small in comparison with curvature length scales. Finally we analysed the state dependence of the bounds in the case of $n$-dimensional Minkowski space, by looking at their temperature dependence in KMS states. We concluded that both the worldline and the worldvolume bounds are non-trivial in the sense described in the introduction. 

This is the first derivation of a quantum strong energy inequality and one of the few QEI results to address the scalar field with nonminimal coupling. More importantly, the establishment of a QSEI is the first step towards a Hawking-type singularity theorem result employing QEI hypotheses. As shown by Refs.~\cite{Fewster:2010gm} and \cite{Brown:2018hym} it is possible to prove singularity theorems of `Hawking type' if we can establish bounds of the form
\be
\label{eqn:sing}
\int R_{\mu \nu}U^\mu U^\nu f(\tau)^2 \leq ||| f|||^2 \,,
\ee
where $||| \cdot |||$ is a suitable Sobolev norm. In the case that the metric $g_{\mu \nu}$ and Hadamard state $\omega$ are physical solutions of the semiclassical Einstein equation
\be
\langle \nord{T_{\mu \nu}} \rangle_\omega =-8\pi G_{\mu \nu} \,.
\ee
the QEI bounds derived could, in some cases, be written in the geometric form of Eq.~\eqref{eqn:sing}. The investigation of this possibility and proof of a Hawking-type singularity theorem with a QEI derived hypothesis is part of an ongoing work to appear elsewhere.

\section*{Acknowledgements}

We thank Eli Hawkins and Daniel Siemssen for comments on the text and Markus Fr\"ob for useful discussions.  This work is part of a project that has received funding from the European Union's Horizon 2020 research and innovation programme under the Marie Sk\l odowska-Curie grant agreement No. 744037 ``QuEST''.

\bibliography{quant}

\end{document}